\documentclass[12pt]{article}
\usepackage{graphicx}
\usepackage{enumitem} 
\usepackage{amsthm} 
\usepackage{float} 
\usepackage{array}
\usepackage{makecell} 
\usepackage{pbox} 
\usepackage{subcaption}
\usepackage{xcolor}
\usepackage{soul} 
\newcommand{\abs}[1]{\left\vert #1 \right\vert}
\newcommand{\pa}[1]{\left( #1 \right)}
\newcommand{\be}{\begin{equation}}
\newcommand{\ee}{\end{equation}}

\newtheorem{thm}{Theorem}[section]
\newtheorem{lem}[thm]{Lemma}
\newtheorem{rem}[thm]{Remark}
\usepackage{indentfirst} 
\usepackage{amsmath}
\usepackage{amssymb}

\usepackage{geometry}
\geometry{vmargin={1in,1in}, hmargin={1in, 1in}}
\usepackage[utf8]{inputenc}
\usepackage[section] {placeins} 
\usepackage{hyperref}
\usepackage{tabularx,ragged2e,booktabs}
\usepackage{authblk}
\usepackage{tcolorbox} 
\usepackage{algorithm} 
\usepackage[noend]{algpseudocode}

\begin{document}

\title{Evaluating interventions for \textit{Plasmodium vivax} forest malaria using a three-scale mathematical model}

\author[1,2]{Shoshana Elgart}
\author[3]{Mark B. Flegg}
\author[4]{Jennifer A. Flegg}
\affil[1]{Laurel Springs School, Ojai, California, United States}
\affil[2]{Stanford University, Stanford, California, United States}
\affil[3]{School of Mathematics, Monash University, Melbourne Australia}
\affil[4]{School of Mathematics and Statistics, The University of Melbourne, Parkville, Australia}
\date{}                      
\setcounter{Maxaffil}{0}
\renewcommand\Affilfont{\itshape\small}
\maketitle
\begin{abstract}
The rising proportion of \textit{Plasmodium vivax} cases concentrated in forest-fringe areas across the Greater Mekong Subregion highlights the importance of pharmaceutical and mosquito control techniques specifically targeted towards forest-going populations. To mathematically assess best-possible antimalarial interventions in the context of hypnozoite reactivation and seasonal forest migration, we extend a previously developed three-scale integro-differential equations model of \textit{P. vivax} transmission. In particular, we fit the model to data gathered over a  four-year period in Vietnam to gain insight into local \textit{P. vivax} dynamics and validate the model's ability to capture epidemiological trends. The calibrated model is then used to generate optimal schedules for mass-drug administration (MDA) in forest-goers and gauge the efficacy of vector control techniques (such as long-lasting insecticide nets and indoor residual spraying) in forest-adjacent areas. Our results highlight the dependence of optimal MDA timing on the demographics of the human population, the importance of interventions targeting the mosquito bite rate, and the need for efficacy in hypnozoite-targeting antimalarial drugs. 
\end{abstract}

\section{Introduction}\label{Intro}
In the past several decades, increasing urbanization and a renewed focus on antimalarial interventions (e.g., mass-drug administration, mobile healthcare stations, and vector control measures such as source reduction) have considerably diminished the malaria caseload in Southeast Asia. In Cambodia, for instance, the per-capita incidence of malaria in 2019 was over 3.5 times less than that recorded in 2006; and in Vietnam, incidence decreased by over 95\% between 1991 and 2014 \cite{chhim2021malaria, wangdi2018analysis}. As of the most recent WHO Malaria Report, no country in the Greater Mekong Subregion (GMS), however, has achieved elimination, with the exception of China in 2021. Cambodia, for instance, numbered approximately eighteen thousand malaria cases in 2022, a sizable annual per-capita incidence of over 0.1\% \cite{world2023world}. 

An increasing proportion of malaria caseload may be classified as \textit{Plasmodium vivax} in origin: unlike the parasites associated with \textit{Plasmodium falciparum}, the second predominant cause of malaria in Southeast Asia, \textit{P. vivax} parasites are distinguished by a dormant stage, in which parasites can live as hypnozoites in liver cells. These hypnozoites can activate sporadically, entering the erythrocytic stage and inducing symptomatic malaria and infectiousness. 

Critical to the sustained transmission of \textit{P. vivax} in the GMS is cyclic migration. \textit{P. vivax} remains endemic in forest-fringe areas, including Oddar Meanchey and Stung Treng Province in Cambodia, Champasak Province in Laos, and Binh Phuc Province in Vietnam, chiefly due to the predominance of the logging, hunting, and tracking industries there, which require individuals, typically termed ``forest-goers", to work overnight in high-risk forested areas  \cite{jongdeepaisal2021acceptability, kounnavong2017malaria, masunaga2021search}. In Cambodia, for instance, forest-going groups face an annual per-capita malaria incidence of over 37.7\%, stressing the continued brunt of malaria for communities involved in forest-based work \cite{iv2024intermittent}. Epidemiological malaria studies focusing on forest-going populations have largely been empirical, either qualitative and interview-based or quantitative and risk-assessing. These studies have stressed the intensified importance of antimalarials and personal protective measures in forest-adjacent communities, describing the need to improve  the accessibility and efficacy of mass-drug administration (MDA) for forest-going people in the GMS \cite{bannister2019forest, jongdeepaisal2021acceptability, nofal2019can, phok2022behavioural, sandfort2020forest}.

Despite the literature identifying forest-going as critical to the spread of malaria, antimalarial interventions for forest-goers in the GMS have been unevenly applied, made complicated both by environmental conditions in the forest and by the epidemiological complexity of \textit{P. vivax}. Vector control techniques such as indoor residual spraying, topical repellents, and insecticide-treated nets were found to be ineffective for individuals moving across large open spaces, while long-sleeved protective clothing was sometimes found impractical, and thus less adhered to, in the  heat and humidity of the forest \cite{nofal2019can,wilson2014topical}.
Fully treating \textit{P. vivax} infections requires eliminating within-liver hypnozoites, but the key anti-hypnozoital drugs primaquine and tafenoquine (also known as radical cure) should not be administered to individuals with G6PD deficiency, so as to prevent oxidant haemolysis \cite{world2016testing}. Furthermore, it is unclear how to seasonally time the use of MDA, in which entire communities---including both susceptible and infectious individuals---receive doses of antimalarial drugs, to forest travel. 

A notable body of mathematical and epidemiological work has studied interventions for \textit{P. vivax}, though few studies have specifically considered \textit{P. vivax} risk in cyclically migrating individuals. In particular, White \textit{et al.} \cite{white2018mathematical} forecasted reductions in \textit{P. vivax} caseload in the Solomon Islands and Papua New Guinea in response to upticks in insecticide-treated net use, while Nekkab \textit{et al.} \cite{nekkab2021estimated} considered the effect of pharmaceutical interventions with tafenoquine radical cure in Brazil. In 2023, Anwar \textit{et al.} \cite{anwar2023optimal} used a combined within-host and population-level model of \textit{P. vivax} (based on the work of Mehra \textit{et al.} \cite{Mehra} and Anwar \textit{et al.} \cite{Anwar}) to determine the best-possible timing of $N$ rounds of MDA in a given population. 
Spatial models of forest-fringe villages include Gerardin \textit{et al.} \cite{gerardin2018impact}, which used an agent-based simulation to optimize seasonal MDA timing for \textit{P. falciparum} (as opposed to \textit{P. vivax}) and Li \textit{et al.} \cite{li2023understanding}, another agent-based model illustrating the relationship between \textit{P. vivax} caseload and forest travel in Myanmar. Building on the multiscale model of \textit{P. vivax} developed in \cite{Mehra} and \cite{Anwar} (which incorporated both the within-host components of hypnozoite accrual and activation, as well as the population-level dynamics of transmission between humans and mosquitoes), Elgart \textit{\textit{et al.}} \cite{pastpaper} introduced a third, spatial scale, modelling \textit{P. vivax} transmission in a village with a significant forest-going population. 

In this paper, we use the framework of the three-scale metapopulation model developed in Elgart \textit{et al.} \cite{pastpaper} to theoretically characterize best-possible \textit{P. vivax} interventions in mobile populations. To our knowledge, we  derive the first analytical approach for optimizing the timing of MDA in forest-going groups. Simultaneously, we use sensitivity analyses and model fitting of the integro-differential equations model in \cite{pastpaper} (to data gathered in Vietnam by Wangdi \textit{et al.} \cite{wangdi2018analysis}) to garner  insight into the epidemiological dynamics of \textit{P. vivax} transmission in forest-adjacent areas.

 \section{Model development} \label{model}

In this section, we briefly re-develop the metapopulation \textit{P. vivax} model first derived in Elgart \textit{et al.} \cite{pastpaper}. We begin by describing the within-host scale embedded into the model, originally described in the work of Mehra \textit{et al.} \cite{Mehra}. 

\subsection{Within-host model}\label{withinhost}
We model the dynamics of a single hepatic hypnozoite, inoculated at time $t = 0$ (Figure \ref{fig1}). Mehra \textit{et al.} \cite{Mehra} defined the hypnozoite to inhabit one of four potential states: the original \textit{establishment} state $H$, the \textit{activated} state $A$, the \textit{cleared} state $C$, and the \textit{hypnozoital death} state $D$. Quiescent hypnozoites in $H$ may either activate or die prior to activation: The transition from $H$ to $A$, completed at rate $\alpha$, represents the process of activation, in which the hypnozoite becomes a schizont and triggers a secondary infection in the host; this process is assumed to be rapid in tropical environments, meaning that no intermediate compartments between establishment and activation are required. The transition from $H$ to $D$, which occurs at rate $\mu$, represents parasite death. Finally, the transition from $A$ to $C$ models recovery in the host (i.e., the clearance of the activated hypnozoite), and occurs at rate $\gamma$, where all rates are defined in days$^{-1}$. Mehra \textit{et al.} \cite{Mehra} established the following equations for $p_H, p_A, p_C, p_D$, the probabilities that the hypnozoite is found in states $H, A, C$, or $D$, respectively, at time $t$:

\begin{figure}
     \centering
\includegraphics[width=0.8\textwidth]{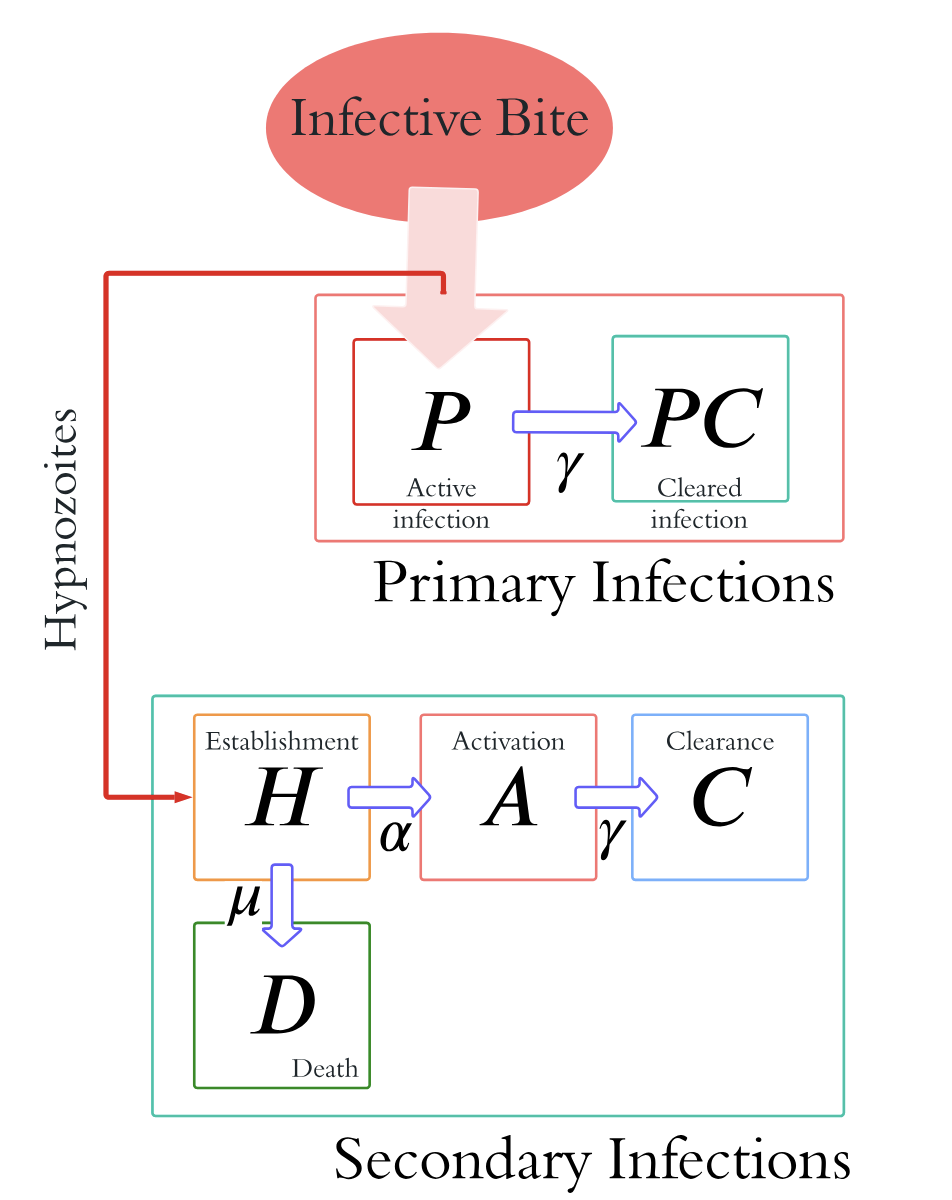}
     \caption{Schematic illustrating the components associated with the within-host model for a single infectious bite, which both generates an active primary infection and inoculates a particular quantity of hypnozoites (with mean $\nu$). Adapted from Figure 2 in Elgart \textit{et al.} \cite{pastpaper}.}   \label{fig1}
\end{figure} 

\be\label{hypnozoiteEqs} \begin{aligned} & p_H(t) = e^{-\pa{\alpha + \mu}t}; \\
& p_{A}(t) = \frac{\alpha}{\alpha + \mu - \gamma}\pa{e^{-\gamma t} - e^{-(\alpha + \mu) t}}; \\ 
& p_C(t) = \frac{\alpha}{\alpha + \mu}\pa{1 - e^{-(\alpha + \mu)t}} - \frac{\alpha}{\alpha + \mu - \gamma} \pa{e^{-\gamma t} - e^{-(\alpha + \mu) t}}; \\
&  p_D(t) = \frac{\mu}{\alpha + \mu} \pa{1 - e^{-(\alpha + \mu)t}}.  \end{aligned} \ee

The dynamics of each hypnozoite are assumed to be independent of the others, such that one can derive a probability generating function (PGF) for the distribution of hypnozoites across compartments at time $t$, conditional on an infectious bite at time $\tau$. In particular, from Equation (49) in Mehra \textit{et al.} \cite{Mehra}, we obtain that   
\[\mathbb{E}\Bigr[\prod_{j \in J_h} z_j^{\mathcal{N}_j(t)} \mid A_\tau \Bigr] = \frac{1}{1 + \nu \pa{1 - \sum_{j \in J_h} z_j \cdot\, p_j(t - \tau)}},\] where $\nu$ is the mean number of hypnozoites established in the liver per infectious bite, $J_h$ is the set of compartments $\{H, A, C, D\}$, $\mathcal{N}_j(t)$ denotes the cardinality of compartment $j$ for $j \in J_h$, and $A_\tau$ is the event of the (only) infectious bite occurring at  time $\tau$.

Simultaneously, under the assumption that each \textit{P. vivax}-carrying mosquito bite instantaneously results in a primary infection, two additional compartments, denoted $P$ and $PC$, are introduced to represent the states associated with non-hypnozoite-induced hematocyte infections. Here, $P$ contains active infections, while $PC$ contains cleared ones. The transition from an active primary infection to a cleared one is assumed to occur at the same rate as the elimination of a secondary infection, namely $\gamma$ (see Figure \ref{fig1}). 

We now let $J = J_h \cup \{P, PC\}$, retaining the same definition for $\mathcal{N}_j(t)$, as above. Equation (30) in \cite{Mehra} defines a PGF for the distribution of hypnozoites and primary infections across states within a single individual at time $t$, with infectious bites modelled as a Poisson process over the interval $(0, t]$:

\begin{equation}\label{eq:pgforiginal}
\begin{aligned} 
& G(z_H, z_A, z_C, z_D, z_P, z_{PC}) :=   \\ & \mathbb{E} [\prod_{j \in J} z_{j}^{\mathcal{N}_j(t)}] = \exp \int_0^t  \lambda(\tau)\pa{\pa{\frac{\pa{z_{P}e^{-\gamma(t - \tau)} + (1 - e^{-\gamma(t - \tau)})z_{PC}}}{1 + \nu\pa{1 - \sum_{j \in J_h} z_{j} \cdot p_{j}(t - \tau)}} \hspace{3pt}} - 1} d\tau.
\end{aligned}
\end{equation}
Here, $\lambda$ is the force of (re-)infection (FORI) to which the individual is exposed, or the rate at which a previously susceptible individual becomes infectious upon exposure to mosquito bites. We express the FORI in terms of  the product of the average \textit{Anopheles} bite rate, the proportion of infectious mosquitoes, and the probability of transmission given a bite from an infected mosquito. In particular, the definition of $\lambda$ is made precise in Section \ref{deriv:population} below.

Equation \eqref{eq:pgforiginal} may be used to determine the probability $p_I(t)$ that an individual in an environment with FORI $\lambda(t)$ is infectious at any given time. In particular, infectiousness  at time $t$ corresponds to the case when $\mathcal{N}_A(t) + \mathcal{N}_P(t) > 0$; using this condition with Equation \eqref{eq:pgforiginal} yields that \be \label{inf} p_I(t) = 1 - P\pa{\mathcal{N}_A(t) + \mathcal{N}_P(t) = 0} = 1 - G\pa{1, 0, 1, 1, 0, 1} = 1- \exp \pa{\int_0^t \lambda(\tau) f(t - \tau)\, d \tau}\ee where we define $f(t)$ by \be\label{f_Eq} f(t) := \frac{1 - e^{- \gamma t}}{1 + \nu p_{A}(t)} - 1,
 \ee 
 noting that $-1 \le f(t) < 0$, $f$ is monotonically increasing, and $\lim_{t \to \infty} f(t) = 0$. A slightly modified version of this computation is found in Equation (15) in Anwar \textit{et al.} \cite{Anwar}.
 
 We notice that, as in Anwar \textit{et al.} \cite{anwar2023optimal}, this derivation for $p_I(t)$ leaves open the possibility of \textit{superinfection} in the population, or the presence of multiple sets of activated hypnozoites and erythrocytic parasites cohabiting in an infectious human, since we do not place restrictions on the total cardinality $\mathcal{N}_A(t) + \mathcal{N}_P(t)$.

\subsection{Population-level model}\label{population-level}

In this section, we present an extension of the within-host model above to the population and metapopulation-level scales, summarizing the model development sections in Anwar \textit{et al.} \cite{Anwar} and Elgart \textit{et al.} \cite{pastpaper}. In particular, we let $\mathcal{H}$ denote the set of humans, and $\mathcal{Q}$ denote the set of mosquitoes. To represent forest travel, we define the \textit{P. vivax} transmission and infection dynamics along two patches (Figure \ref{Two-patch}). Patch 1, referred to as the ``village" patch, is the patch of permanent residence for all individuals in $\mathcal{H}$, associated with a localized and non-migrating population of mosquitoes $\mathcal{Q}_1 \subset \mathcal{Q}$. Patch 2, referred to as the ``forest" patch, is associated with a (similarly localized and non-migrating) population of mosquitoes $\mathcal{Q}_2 := \mathcal{Q} \setminus \mathcal{Q}_1$. Here, the populations are such that $\abs{Q_2} >> \abs{Q_1}$, since forest \textit{Anopheles} populations, less encumbered by LLINs and other insecticide interventions, typically far exceed village populations in number \cite{overgaard2003effect, white2014modelling}. As such, letting $\lambda_i(t)$ denote the FORI that individuals on Patch $i$ experience at time $t$, we assume that $\lambda_2(t) >> \lambda_1(t)$.

To generalize the cyclic partial population migration seen in forest travel, we identify individuals in $\mathcal{H}$ as belonging to either a ``moving" or a ``stationary" (non-forest going) group. In particular, at regular intervals of time specified below, Patch 2 is frequented by the moving subset of the human population, denoted $\mathcal{M} \in \mathcal{H}$, while the remaining humans (in the stationary group $\mathcal{N} := \mathcal{H} \setminus \mathcal{M}$) remain on Patch 1. We suppose that both $|\mathcal{M}|$ and $|\mathcal{N}|$ are arbitrarily large. Simultaneously, we assume that the model represents a short time-scale (e.g., a year or less), meaning that immigration may be disregarded, and demographic processes assumed negligible. 

To consider the number of moving and stationary individuals as bounded proportions of the total human population, we introduce the notation \[M:= \frac{|\mathcal{M}|}{|\mathcal{H}|}; \hspace{10pt} N:= \frac{|\mathcal{N}|}{|\mathcal{H}|}.\] In particular, $N + M = 1,$ and both $N$ and $M$ are constant. 

For the mosquito populations, we introduce the ratios \[Q_1 := \frac{|\mathcal{Q}_1|}{\abs{\mathcal{H}}}; \hspace{10pt} Q_2 := \frac{|\mathcal{Q}_2|}{\abs{\mathcal{H}}},\] 
where we presume that $Q_1 << 1$.

We now consider more precisely the movement processes by which individuals migrate between patches (Figure 2). We first define the \textit{movement functions} $A_1(t), A_2(t):= \mathbb{R}_+ \to [0, 1]$, which give the fraction of moving individuals on Patches 1 and 2, respectively, at time $t$. More specifically, letting the quantity of moving individuals in $\mathcal{M}$ on Patch $i$ be denoted by the function $\mathcal{M}_i(t)$, we have that \[A_i(t) = \frac{\abs{\mathcal{M}_i(t)}}{\abs{\mathcal{M}}}.\] As such, $A_1(t) + A_2(t) = 1$ and the proportion of humans on Patch 1 at time $t$ becomes $N + A_1(t)M$, while the proportion of humans on Patch 2 must be $A_2(t)M$. Since forest-going in the GMS follows seasonal trends (as suggested, for instance, by Bannister-Tyrrell \textit{\textit{et al.}} \cite{bannister2019forest}), we choose the form
\be\label{A1A2} \begin{aligned} A_1(t) = u + v \cos\pa{\omega t}, \\ A_2(t) = 1 - u - v \cos\pa{\omega t},\end{aligned} \ee
for the movement functions $A_1, A_2$, defining a small ``amplitude" parameter $v << 1$, a larger parameter $u$ satisfying $0.5 \leq u < 1 - v$ (which represents the mean proportion of moving individuals on Patch 1 over time), and the parameter $\omega \le 2\pi/365$. In particular, $\omega$ models the average frequency of forest travel, which we allow to be arbitrary in our mathematical results (but, as discussed in our numerical results, is typically taken to be such that the period of the population-wide travel cycle encompasses a single year).  

Below, in Section \ref{eq:deriv}, we will use the equations presented here (alongside the result summarized in Equation (\ref{inf})) to determine the equation for the compartment of infectious moving individuals over time, denoted $I_M(t)$. Similarly, we use the within-host model outlined in Section \ref{withinhost} to derive the equation for the compartment of infectious stationary individuals, denoted $I_N(t)$. As in Elgart \textit{\textit{et al.}} \cite{pastpaper}, these equations do not depend on susceptible or liver-only infected compartments; only $I_M(t)$ and $I_N(t)$. Instead, $I_M(t)$ and $I_N(t)$ can be expressed as Volterra convolution equations, determined exclusively by the history of the FORI on one or both patches and the parameters governing within-host and population dynamics.

\begin{figure}[h!]
     \centering
         \includegraphics[width=\textwidth]{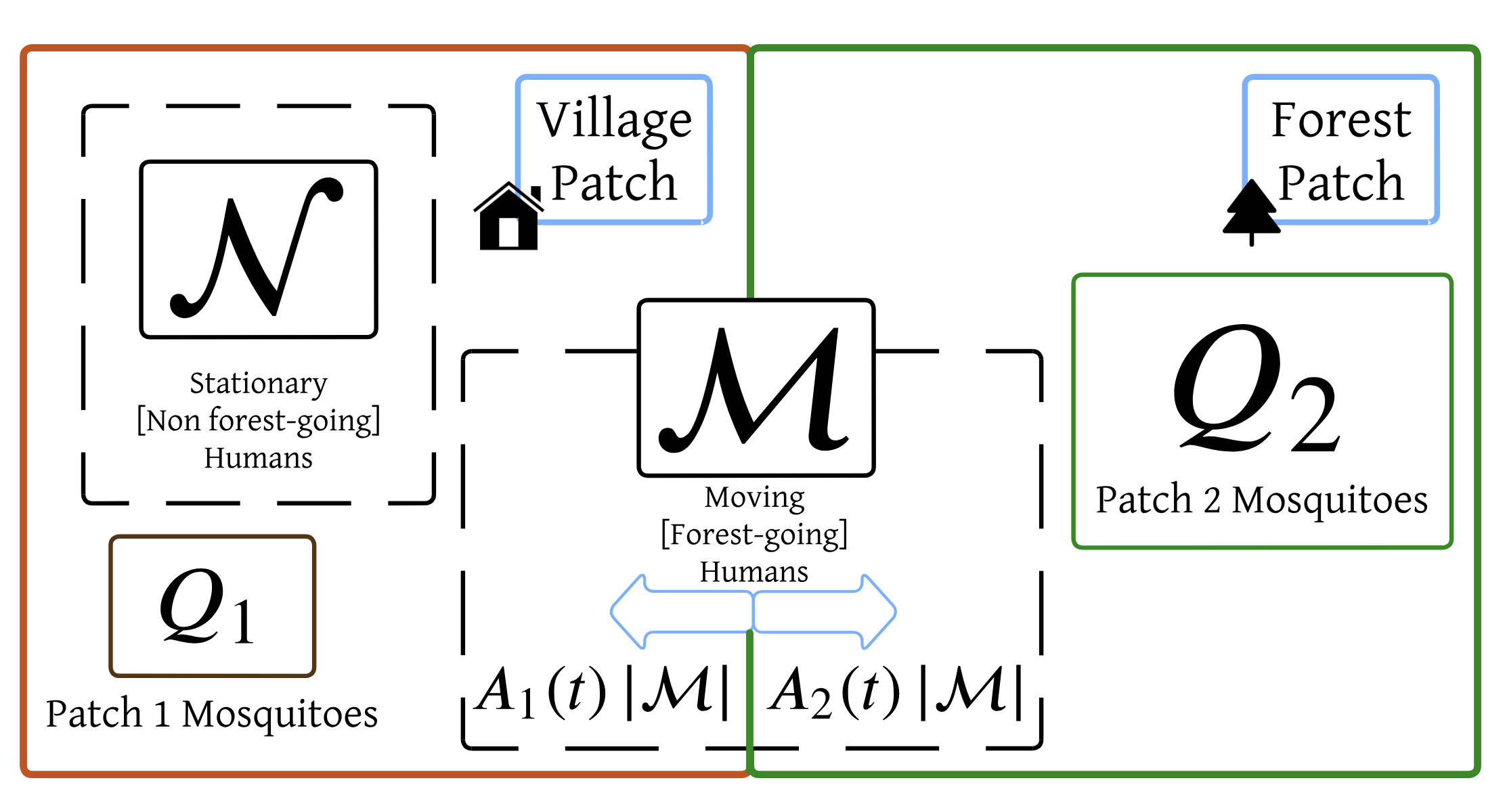}
 \caption{Schematic of the two-patch metapopulation model, with Patch 1 (the village patch) on the left and Patch 2 (the forest patch) on the right. Each patch is associated with a distinct and non-moving population of mosquitoes, denoted $\mathcal{Q}_1$ and $\mathcal{Q}_2$, respectively; non-forest going humans (in $\mathcal{N}$) remain stationary on the village patch, while forest-going humans (in $\mathcal{M}$) traverse the patches, with $A_i(t)\abs{\mathcal{M}}$ forest-going humans located on Patch $i$, where $i \in \{1, 2\}$, at any fixed time $t$ (such that the proportion of the human population found non-permanently on Patch $i$ is $A_i(t)M$). Adapted from Figure 1 in Elgart \textit{et al.} \cite{pastpaper}.}         \label{Two-patch}

\end{figure}

 \subsubsection{Volterra integral equations for the infectious human compartments}
 \label{eq:deriv}

 Since the stationary population is homogeneous with respect to hypnozoite and infection dynamics and the number of stationary humans is assumed to be arbitrarily large, by the Law of Large Numbers, the density of $I_N(t)$ approaches the probability that a single stationary individual is infectious. This probability is given for a non-spatial model in Equation (\eqref{inf}) and hence we have
 \be \label{ineq} I_N(t) = N - N \exp\pa{\int_{0}^{t}\lambda_1(\tau)\, f(t - \tau) \, d \tau}.\ee 

 Similarly, we can closely approximate the density of infectious moving individuals in the population by the expected size of $I_M(t)$.  
 To calculate $\mathbb{E}[I_M(t)]$, we let there be a bijection mapping each 
 moving person $j \in \mathcal{M}$ to a periodic, deterministic \textit{movement function} $B_j(t)$, which functions as an indicator variable, equal to $1$ while the corresponding individual remains on Patch $1$, and to $0$ when the individual is found on Patch $2$.  Fixing an individual with movement function $B_j(t)$, where $j \in \mathcal{M}$, we notice that the PGF capturing within-host dynamics for this individual takes the form 
 \begin{equation}\label{eq:pgfmoving}
\begin{aligned} 
& G(\tilde{z}_H, \tilde{z}_A, \tilde{z}_C, \tilde{z}_D, \tilde{z}_P, \tilde{z}_{PC}) :=   \\ & \mathbb{E} [\prod_{j \in J} \tilde{z}_{j}^{\mathcal{N}_j(t)}] = \exp \int_0^t  \pa{B_j(\tau) \lambda_1(\tau) + (1 - B_j(\tau)) \lambda_2(\tau)} \times \\ & \pa{\frac{\tilde{z}_{P}e^{-\gamma(t - \tau)} + (1 - e^{-\gamma(t - \tau)})\tilde{z}_{PC}}{1 + \nu\pa{1 - \sum_{j \in J_h} \tilde{z}_{j} \cdot p_{j}(t - \tau)} \hspace{3pt}} - 1} d\tau
\end{aligned}
\end{equation} (as presented in Equation (10) of \cite{pastpaper}).  
 By extending Equation (\ref{inf}) in Section \ref{withinhost}, we find that the probability of an individual with this movement function to be in $I_M$ at time $t$ must therefore be
 \[p_j(t) = 1 - \exp\pa{\int_{0}^{t}  \pa{B_j(\tau) \lambda_1(\tau) + (1 - B_j(\tau)) \lambda_2(\tau)}f(t - \tau)\ d\tau}.\]

We assume that the movement functions associated with each forest-going individual are shifted versions of one another (where the extent of the shift may vary with time). In particular, we suppose that \[B_j(t) = B_i(t + w_{ij}(t))\] for all  $i, j \in \mathcal{M}$, where each function $w_{ij}(t)$ satisfies $\abs{w_{ij}} << 1$ (i.e., there is little to no variation in individual movement). 

In Appendix II, we prove the following theorem:

\begin{thm}\label{thmEq}
For arbitrarily small values of $Q_1, |v|,$ and $\textnormal{sup}_{i,j\in\mathcal{M}} \abs{w_{i,j}}$, arbitrarily large values of $\omega$, and $\abs{f(t)}<1$, we have that \be I_M(t) = M - M \pa{\frac{1}{1 + \epsilon}} \exp\pa{\int_{0}^{t} \Bigr[A_1(\tau) \lambda_1(\tau) + A_2(\tau)  \lambda_2(\tau)\Bigr]f(t - \tau)\ d\tau},\ee
where $\abs{\epsilon} << 1$. As such, given these assumptions, $I_M(t)$ can closely be approximated by \begin{equation}\label{densityofmovinginf} \begin{aligned} I_M(t) \approx M - M \exp\pa{\int_{0}^{t} \pa{A_1(\tau) \lambda_1(\tau) + A_2(\tau) \lambda_2(\tau)}f(t - \tau)\ d\tau}.\end{aligned} \end{equation}
\end{thm}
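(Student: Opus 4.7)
The plan is to use the Law of Large Numbers to write $I_M(t)$ as an average over individual infection probabilities, and then compare the average of exponentials to the exponential of the average, with the error controlled by the stated smallness assumptions.

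More concretely, since $|\mathcal M|$ is arbitrarily large and the individuals are exchangeable up to their movement functions $B_j$, the Law of Large Numbers gives $I_M(t)=M\cdot|\mathcal M|^{-1}\sum_{j\in\mathcal M}p_j(t)$, where $p_j(t)=1-\exp(X_j(t))$ and
\[
X_j(t):=\int_0^t\bigl[B_j(\tau)\lambda_1(\tau)+(1-B_j(\tau))\lambda_2(\tau)\bigr]f(t-\tau)\,d\tau.
\]
Defining $\bar X(t)$ by replacing $B_j(\tau)$ with the cross-sectional average $A_1(\tau)=|\mathcal M|^{-1}\sum_j B_j(\tau)$ (and using $A_1+A_2=1$), the desired statement reduces to showing
\[
\frac{1}{|\mathcal M|}\sum_{j\in\mathcal M}\exp\bigl(X_j(t)\bigr)=\exp\bigl(\bar X(t)\bigr)\cdot(1+\epsilon),\qquad |\epsilon|\ll 1,
\]
since then $1/(1+\epsilon)$ appears as in the theorem statement.

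I would factor $\exp(X_j)=\exp(\bar X)\exp(\delta_j)$ with
\[
\delta_j(t):=X_j(t)-\bar X(t)=\int_0^t[B_j(\tau)-A_1(\tau)]\,[\lambda_1(\tau)-\lambda_2(\tau)]\,f(t-\tau)\,d\tau,
\]
and Taylor-expand $\exp(\delta_j)=1+\delta_j+\tfrac{1}{2}\delta_j^2+\cdots$. Averaging over $j$ kills the first-order term exactly, by the definition of $A_1$, so $\epsilon$ is dominated by $\tfrac12\,\mathbb{E}_j[\delta_j(t)^2]$ plus strictly smaller higher-moment terms. The job then is to bound $|\delta_j(t)|$ uniformly in $j$, which I would do by combining two mechanisms: (i) for large $\omega$, the indicator $B_j(\tau)-A_1(\tau)$ is a rapidly-oscillating periodic function of zero temporal mean, and the slowly-varying kernel $[\lambda_1(\tau)-\lambda_2(\tau)]f(t-\tau)$ can be pulled out via integration by parts, yielding a $O(1/\omega)$ contribution; (ii) the hypotheses $|v|\ll 1$ and $\sup_{i,j}|w_{ij}|\ll 1$ ensure that $A_1(\tau)$ is nearly constant and that all individual indicators are essentially synchronized shifts of a common cycle, so the ensemble behaves coherently. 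The roles of $Q_1\ll 1$ (which keeps $\lambda_1$ small so that $\lambda_1-\lambda_2\approx-\lambda_2$ is controllable in magnitude) and $|f(t)|<1$ (which bounds the convolution kernel) enter when turning the pointwise $\delta_j$ bound into a uniform bound on the time integral.

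The main obstacle will be step (i): making the rapid-oscillation/averaging argument fully rigorous when $B_j(\tau)$ is a $\{0,1\}$-valued step function and $A_1(\tau)$ is its smooth mean, so that the standard integration-by-parts trick needs to be applied to each piecewise-constant block, or alternatively carried out by a Fourier decomposition of the periodic fluctuation $B_j-A_1$ together with the spectral decay of the kernel. Once this quantitative bound is in hand, the second moment $\mathbb{E}_j[\delta_j(t)^2]$ can be made arbitrarily small, $\epsilon$ inherits the same smallness, and dropping the $1/(1+\epsilon)$ factor yields the stated approximation \eqref{densityofmovinginf}.
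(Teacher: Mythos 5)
Your proposal follows essentially the same route as the paper's Appendix I proof: both use the Law of Large Numbers to write $I_M(t)$ as an ensemble average of $1-e^{X_j(t)}$, and both control the gap between the average of exponentials and the exponential of the average to \emph{second} order in the individual deviations, which is exactly where the factor $1/(1+\epsilon)$ with $\abs{\epsilon}=O(\textnormal{deviation}^2)$ comes from. The difference lies in how the deviation is bounded. The paper never forms your $\delta_j=\int_0^t(B_j-A_1)(\lambda_1-\lambda_2)f\,d\tau$; instead it exploits the algebraic identity $A_2(\tau)\lambda_2(\tau)=ab\,I_{m2}(\tau)/M$, so the individual-level fluctuation in the (large) forest-patch exposure cancels exactly up to $O(\sup\abs{w_{ij}})$, and the only residual random term is $\mathcal{C}(t)=v\int_0^t\lambda_1(\tau)\cos(\omega\tau-\mathcal{W})f(t-\tau)\,d\tau$, which is small simply because it carries the product $\abs{v}\cdot\lambda_1=O(\abs{v}\,Q_1)$ together with $\abs{f}\le 1$ --- no oscillatory-integral estimate is needed. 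Your centering at the cross-sectional mean is cleaner in one respect (the first moment of $\delta_j$ vanishes identically, so $\epsilon$ is genuinely a variance), whereas the paper bounds $\max\mathcal{C}^2$ without using that exact first-moment cancellation. The obstacle you flag --- that $B_j$ is $\{0,1\}$-valued while $A_1$ is smooth, forcing a genuine $O(1/\omega)$ integration-by-parts argument against the non-small $\lambda_2$ kernel --- is real, but the paper sidesteps it by silently replacing the indicator with the smooth surrogate $\mathcal{B}(t)=u+v\cos(\omega t+\mathcal{W})$; under that convention $B_j-A_1=O(\abs{v}\sup\abs{w_{ij}})$ pointwise, and your $\delta_j$ is already uniformly small via $\int_0^\infty\abs{f}\,d\tau=\abs{C_0}<\infty$ with no stationary-phase machinery. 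One small misstatement to fix: $Q_1\ll 1$ does not make $\lambda_1-\lambda_2$ small (that difference is dominated by the large $\lambda_2$); its actual role, in both your argument and the paper's, is to make the village-patch fluctuation term $(B_j-A_1)\lambda_1$ doubly small.
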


\begin{rem}
    The only feature of $f(t)$ needed to prove Theorem \ref{thmEq} is the fact that $|f(t)|$ is bounded everywhere by $1$. (In particular, the theorem would hold if $f(t)$ was replaced by any function $|g(t)| \le |f(t)|$.) We use this observation in Section \ref{mda} to optimize the population-level (rather than individual) timing for a single round of MDA. 
\end{rem}

\subsubsection{A precise definition for the force of reinfection (FORI)}\label{deriv:population}
We define $\lambda_i(t)$, where $i \in \{1, 2\}$, as the number of infecting bites per capita, per day, at time $t$, on Patch $i$. This in turn is given by the product of the ratio of infectious mosquitoes to humans on Patch $i$ and the number of \textit{P. vivax}-transmitting bites per infectious \textit{Anopheles} individual per day. 

We denote the proportion of infectious mosquitoes relative to the number of humans on Patch $i$ by $I_{mi}(t)$, $i \in \{1, 2\}$. In particular (analogously to our definitions of $M, N, Q_1$, and $Q_2$), if $\mathcal{I}_{mi}(t)$ denotes the set of malaria-carrying mosquitoes localized to Patch $i$ at time $t$, where $i \in \{1, 2\}$, then \[I_{mi}(t) = \frac{\abs{\mathcal{I}_{mi}(t)}}{\abs{\mathcal{H}}}.\] As such, $I_{mi}(t)/Q_i$ gives the fraction of the \textit{Anopheles} population on Patch $i$ that belongs to an infectious state at time $t$. 

As in the case of the notation for the density of infectious humans, we refrain from distinguishing between $\mathcal{I}_{mi}(t)$ and $I_{mi}(t)$ in the below, instead following the convention of using $I_{m1}(t)$ and $I_{m2}(t)$ to additionally represent the \textit{compartments} of infectious mosquitoes. Furthermore, we let $a$ denote the expected value of the bite rate per mosquito, and let $b$ denote the probability of transmission from mosquitoes to humans given an infective bite.    

As such, following Equation (11) in \cite{pastpaper}, we obtain that 
\begin{equation}\label{lambdaeq} \begin{aligned} \lambda_1(t) = \underbrace{ab}_{\substack{ \text{Mean number} \\ \text{of infecting} \\ \text{bites per} \\ \text{mosquito} \\ \text{a day}}} \cdot \ \underbrace{\frac{I_{m1}(t)}{Q_1}  \cdot  \frac{Q_1}{\pa{N + A_1(t) \, M}}}_{\substack{\text{Infectious mosquito} \\ \text{to human ratio}}} = \frac{ab\,I_{m1}(t)}{\pa{N + (u + v \cos \omega t) \, M}}; \\ \lambda_2(t) = ab\ \cdot \ \frac{I_{m2}(t)}{Q_2} \ \cdot  \ \frac{Q_2}{A_2(t)M} = \frac{ab\,I_{m2}(t)}{M \pa{(1-u) - v \cos \omega t}}. \end{aligned} \end{equation}

\subsubsection{A complete system of equations for the metapopulation model}

In this section, we complete the re-derivation of the model in Elgart \textit{\textit{et al.}} \cite{pastpaper} by deriving the equations for the mosquito compartments, which are coupled to Equations (\ref{ineq}) and (\ref{densityofmovinginf}). In particular, following the structure of the model in Anwar \textit{\textit{et al.}} \cite{Anwar}, we assume that mosquitoes on any given patch are either susceptible, exposed (during which state the sporogonic cycle takes place within the mosquito host), or infectious and liable to transmit the disease (Figure \ref{Schematic_3}). 

Analogously to the use of $I_{m1},I_{m2}$ to represent the proportion of infectious mosquitoes relative to the total number $\abs{\mathcal{H}}$ of humans, we let $S_{m1}$ and $S_{m2}$ represent the density of susceptible mosquitoes on the forest and village patches, respectively, as a fraction of $\abs{\mathcal{H}}$. Similarly, we let $E_{m1}$ and $E_{m2}$ represent the proportions of mosquitoes in a latent state on Patches $1$ and $2$, respectively, relative to the total quantity $\abs{\mathcal{H}}$. In particular, we note that, for all $t$,
\be\begin{aligned}
    & S_{m1}(t) + E_{m1}(t) + I_{m1}(t) = Q_{1}; \\
    & S_{m2}(t) + E_{m2}(t) + I_{m2}(t) = Q_{2}. 
\end{aligned} \ee

As in Anwar \textit{et al.} \cite{Anwar}, mosquitoes transition from the susceptible state to the infectious state at a rate given by the mosquito force of infection (mFOI) associated with each patch. The mFOI are extensions of the human FORI, differing only in that the density of infectious humans and probability of human-to-mosquito transmission is considered, rather than the density of infectious mosquitoes and the probability of mosquito-to-human transmission.   In Equation (21) of Elgart \textit{\textit{et al.}} \cite{pastpaper}, the mFOI are given by \begin{equation}\begin{aligned}& \Phi_1(t) = \frac{ac \, \pa{I_{N} + (u - v \cos \omega t) \, I_{M}}}{N + (u + v \cos \omega t) \, M}; \\ & \Phi_2(t) = \frac{ ac\,I_M}{M},
\end{aligned}\end{equation} where $\Phi_i(t), i \in \{1, 2\}$ denotes the mFOI on Patch $i$ at time $t$ and $c$ is the probability of transmission given a susceptible mosquito biting an infectious human. 
 The transition from the exposed state to the infectious one occurs at rate $n$, while mosquito demography (i.e., both birth and death) occurs at the rate $g_i$ on Patch $i$. 
 
 In the original derivation of the population-level and three-scale models (\cite{Anwar}, \cite{pastpaper}), the parameters $g_i$ were assumed time-invariant. (In the case of the three-scale model, this ensured that periodic human migration remained the only possible cause of observed cyclicity in malaria prevalence.) Here, however, as in Anwar \textit{et al.} \cite{anwar2023optimal}, we relax the assumption that $g_1$ and $g_2$ are constant, so as to better fit the model to real-life data from Vietnam, where seasonal effects on mosquito egg-laying and larval development rates likely account for a significant proportion of oscillatory behavior in annual prevalence and incidence \cite{wangdi2018analysis}.
 
Following Anwar \textit{et al.} \cite{anwar2023optimal}, we introduce the following sinusoidal temporal dependence into the expressions for $g_1$ and $g_2$:  

\be\label{eq:humsimplify:seasonality} g_1(t) := \tilde{g}_1 + \tilde{g}_1\eta\,\cos\pa{2\pi t/365}; \hspace{10pt} g_2(t) := \tilde{g}_2 + \tilde{g}_2 \eta \,\cos \pa{2 \pi t/ 365}.\ee

Here, $\eta$ is the \textit{seasonality amplitude} parameter, specifying the extent of seasonal fluctuations in mosquito demography, while $\tilde{g}_1, \tilde{g}_2$ are the constant \textit{baseline} demography parameters for mosquitoes on Patches 1 and 2, respectively. We fit all three of these parameters to existing incidence data below. 

The model equations are therefore given by:  
\begin{figure}[h!]
     \centering
         \includegraphics[width=\textwidth]{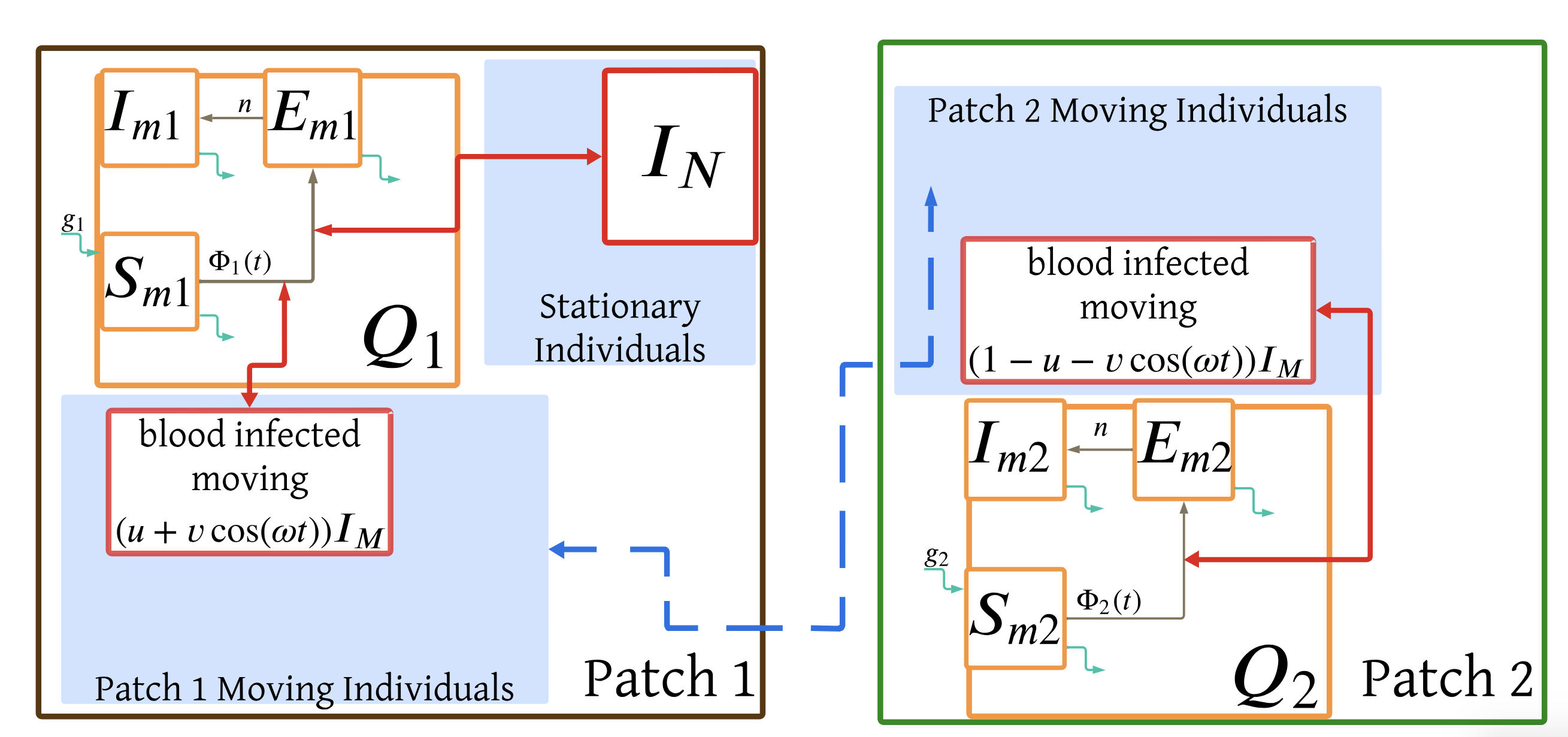}
     \caption{Schematic showing the transitions between mosquito and human compartments across Patch 1 and Patch 2. The blue dashed arrow between the patches represents forest travel while red arrows represent infection mechanisms. Adapted from Figure 4 in Elgart \textit{et al.} \cite{pastpaper}.}\label{Schematic_3}
\end{figure} 

\begin{equation}\label{eq:humsimplify}
\begin{aligned}
    &\frac{d S_{m1}}{dt} = g_{1}(t)Q_1 - \frac{ac \, \pa{I_{N} + (u + v \cos \omega t) \, I_{M}}}{N + (u + v \cos \omega t) \, M}\, S_{m1} - g_{1}(t) S_{m1}; \\
    & \frac{d E_{m1}}{dt} = \frac{ac \, \pa{I_{N} + (u +  v \cos \omega t) \, I_{M}}}{N + (u + v \cos \omega t) \, M}\, S_{m1} - (g_{1}(t) + n)E_{m1}; \\
    & \frac{d I_{m1}}{dt} = n E_{m1} - g_{1}(t) I_{m1}; \\
    &\frac{d S_{m2}}{dt} = g_{2}(t)Q_2 - \frac{ ac\,I_M}{M} S_{m2} - g_{2}(t) S_{m2}; \\
    & \frac{d E_{m2}}{dt} = \frac{ ac\,I_M}{M} S_{m2} - (g_{2}(t) + n)E_{m2}; \\
    & \frac{d I_{m2}}{dt} = n E_{m2} - g_{2}(t) I_{m2};\\
    & I_{M} = M  - M \exp\pa{ab\,\int_{0}^{t}  \pa{\frac{I_{m1}(\tau)(u + v \cos \omega \tau)}{N + (u + v \cos \omega \tau)M}  } f(t - \tau)\, d\tau} \times \\ &  \hspace{100pt} \exp\pa{ab\, \int_0^t \frac{I_{m2}(\tau)((1 - u) - v\cos \omega \tau)}{((1 - u) - v\cos \omega \tau)M} \, f(t - \tau)\, d\tau}; \\
    & I_{N} =  N - N \exp\pa{ab\,\int_{0}^{t}\frac{I_{m1} (\tau)}{N +  (u + v\cos \omega \tau) \, M}\, f(t - \tau) \, d \tau}. \\
  \end{aligned}
\end{equation}
 As in Equation (23) of \cite{pastpaper}, the initial conditions are: 
\be \begin{aligned} & S_{m1}(0) = S^{(0)}_{m1}; \hspace{3pt} E_{m1}(0) = E^{(0)}_{m1}; \hspace{3pt}I_{m1}(0) = I^{(0)}_{m1}; \\ &   \hspace{3pt}S_{m2}(0) = S^{(0)}_{m2}; \hspace{3pt}E_{m2}(0) = E^{(0)}_{m2}; 
\hspace{3pt}I_{m2}(0) = I^{(0)}_{m2}; \hspace{3pt} I_M(0) = 0; \hspace{3pt} I_N(0) = 0. \end{aligned} \ee
\subsubsection{Relevant analytical results for the model in Equation \eqref{eq:humsimplify}}\label{approx}

In \cite{pastpaper}, we determine numerically that the endemic solutions of the model in Equation (\ref{eq:humsimplify}) approach periodic orbits for reasonable regions of parameter space (where $\eta$ has been set to zero, such that the only periodic forcing in the system occurs due to movement from patch to patch). As such, the limiting solutions \\ $S^{*}_{m1}, S^{*}_{m2}, E^{*}_{m1}, I^{*}_{m1}, S^{*}_{m2}, E^{*}_{m2}, I^{*}_{m2}, I^{*}_M, I^{*}_N$ are sought under the ansatz  

\begin{equation}\label{coeff}
\begin{aligned}
& S^{*}_{mi}(t) \approx u^{(i)}_{-1} \exp \pa{-\omega \textbf{i} t} + u^{(i)}_0 + u^{(i)}_{1} \exp \pa{\omega \textbf{i} t} + u^{(i)}_\epsilon, \hspace{5pt} i \in \{1, 2\}; \\
& E^{*}_{mi}(t)  \approx v^{(i)}_{-1} \exp \pa{-\omega \textbf{i} t} + v^{(i)}_0 + v^{(i)}_{1} \exp \pa{\omega \textbf{i} t} + v^{(i)}_\epsilon, \hspace{5pt} i \in \{1, 2\};  \\
& I^{*}_{mi}(t) \approx w^{(i)}_{-1} \exp \pa{-\omega \textbf{i} t} + w^{(i)}_0 + w^{(i)}_{1} \exp \pa{\omega \textbf{i} t} + w^{(i)}_\epsilon, \hspace{5pt} i \in \{1, 2\};  \\
& I^{*}_{M}(t)  \approx x_{-1} \exp \pa{-\omega \textbf{i} t} + x_0 + x_{1} \exp \pa{\omega \textbf{i} t} + x_\epsilon;  \\
& I^{*}_{N}(t)  \approx y_{-1} \exp \pa{-\omega \textbf{i} t} + y_0 + y_{1} \exp \pa{\omega \textbf{i} t} + y_\epsilon, \\
\end{aligned}
\end{equation}
where the $\epsilon$-order terms are assumed to be arbitrarily small. 
Under this approximation, the risk of infection for an unspecified moving or stationary individual varies sinusoidally over each ``movement period" $[T, T + 2\pi/\omega]$ when $T >> 1$. 

In \cite{pastpaper}, we derive analytical approximations for each of the above Fourier coefficients. For instance, we obtain the following result in Theorem \ref{thm1}.\begin{thm}\label{thm1} When $\abs{v}$ and $\eta$ are sufficiently small and the frequency of travel (e.g., $\omega/2\pi$) is sufficiently high, $w^{(2)}_0$ is closely approximated by a solution in the zero locus of the  nonlinear function \be\label{eqforw2} \begin{aligned} & L(w_0^{(2)}) := \frac{N - N \exp \pa{\frac{ab\,\abs{C_0}A(w^{(2)}_0)}{N + Mu}} + \frac{M(u + v)g_2(g_2 + n)w^{(2)}_0 }{ac\pa{Q_2  n - (g_2 + n)w_0^{(2)}}}}{N + (u + v)M} \ + \\ & \frac{N - N \exp \pa{\frac{ab\,\abs{C_0}A(w^{(2)}_0)}{N + Mu}} + \frac{M(u - v)g_2(g_2 + n)w^{(2)}_0 }{ac\pa{Q_2  n - (g_2 + n)w_0^{(2)}}}}{N + (u - v)M} - \frac{2g_1A(w^{(2)}_0)\pa{g_1 + n}}{ac\pa{nQ_1 - (g_1 + n)A(w^{(2)}_0)}}, \end{aligned} \ee where \be \begin{aligned} A(w^{(2)}_0) = \frac{N + Mu}{ab\,C_0u} \pa{\frac{ab\,\abs{C_0}w^{(2)}_0}{M} + \ln\pa{1 - \frac{g_2(g_2 + n)w^{(2)}_0 }{ac\,\pa{Q_2 n - (g_2 + n)w_0^{(2)}}}}} \end{aligned} \ee and 
\[C_0 = \lim_{t \to \infty} \int_0^t f(\tau) \ d\tau.\]  \end{thm}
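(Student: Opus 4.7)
The plan is to substitute the Fourier-truncated ansatz (\ref{coeff}) into the closed system (\ref{eq:humsimplify}) and match the zeroth Fourier mode of each equation. The three smallness/largeness conditions enable a sequence of reductions: setting $\eta \approx 0$ removes the slow time-dependence of $g_1, g_2$; treating $\abs{v}$ as a perturbation parameter decouples the constant mode $w_0^{(i)}$ from the $\pm 1$ modes to leading order; and taking $\omega$ large forces Volterra convolutions of any oscillating integrand with $f(t-\tau)$ to be asymptotically negligible, so that only the constant mode of each FORI contributes to the exponent of $I_M$ and $I_N$ at leading order.

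\textbf{Mosquito subsystem.} The equations for $E_{mi}, I_{mi}$ are linear in the mosquito compartments for a fixed human infection pressure. Matching zeroth Fourier modes (with $g_i$ replaced by $\tilde{g}_i$ to leading order in $\eta$), the balance $0 = n v_0^{(i)} - \tilde{g}_i w_0^{(i)}$ yields $v_0^{(i)} = \tilde{g}_i w_0^{(i)}/n$, and the mass constraint $S_{mi} + E_{mi} + I_{mi} = Q_i$ together with the $E_{mi}$ balance allow the zero-mode of the mFOI $\Phi_i$ to be written as $\tilde{g}_i(\tilde{g}_i + n)w_0^{(i)}/\bigl(ac(Q_i n - (\tilde{g}_i + n)w_0^{(i)})\bigr)$, which is exactly the rational expression appearing inside $L$. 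This pins down, patch-by-patch, the zero-mode infection pressure purely in terms of $w_0^{(i)}$.

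\textbf{Volterra equations and $A(w_0^{(2)})$.} The expressions for $I_M, I_N$ in (\ref{eq:humsimplify}) involve exponentials of $\int_0^t \lambda_i(\tau)\,f(t-\tau)\, d\tau$, where $\lambda_i$ is proportional to $I_{mi}$ divided by a movement-weighted head count. Expanding these quotients in $v$ and $I_{mi}$ in its Fourier modes, each oscillating piece convolved with $f$ is $o(1)$ as $\omega \to \infty$ by a Riemann--Lebesgue-style estimate using $\abs{f}\le 1$ and $\int_0^\infty f\,d\tau = C_0$, whereas each constant mode yields an asymptotic factor $C_0$. Carrying this out on Patch 2 and inverting the resulting exponential for $I_M$ against the Patch 2 mosquito-balance expression from Step 1 produces the quantity $A(w_0^{(2)})$ claimed in the theorem. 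Feeding this back into the Patch 1 Volterra equation, the symmetric dependence of $\lambda_1$ on $A_1(t)=u+v\cos\omega t$ yields, after small-$v$ expansion and zero-mode extraction, two contributions of the form $\bigl(N - N\exp(\cdot)\bigr)/(N + (u\pm v)M)$. Combined with the Patch 1 instance of the rational relation from Step 1, the self-consistency requirement on the Patch 1 mFOI collapses to the single scalar equation $L(w_0^{(2)})=0$.

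\textbf{Main obstacle.} The most delicate estimate is the high-$\omega$ averaging in the Volterra step. Because $f(t-\tau)$ decays only as an integrable tail rather than being compactly supported, bounding $\int_0^t e^{\pm i\omega \tau} f(t-\tau)\, d\tau$ requires combining integration by parts with the monotonicity and boundedness of $f$ recorded after (\ref{f_Eq}), and one must keep careful track of transient contributions near $\tau=0$ and $\tau=t$. Related to this is the need to verify that the $\epsilon$-order remainders in the ansatz (\ref{coeff}) remain uniformly small under the joint smallness of $\abs{v},\eta$ and largeness of $\omega$; this is what justifies the ``closely approximated'' language in the theorem and is most naturally handled through a contraction-type argument on the Fourier coefficients of the coupled system.
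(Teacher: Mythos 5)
A preliminary remark: this paper does not actually prove Theorem \ref{thm1}; it imports the result from Elgart \textit{et al.} \cite{pastpaper} (``In \cite{pastpaper}, we derive analytical approximations\dots''), so the only basis for comparison here is the structure of the stated formula and the cited derivation. Against that, your reconstruction is essentially the intended argument: the rational expressions $g_i(g_i+n)w_0^{(i)}/\bigl(ac(Q_i n - (g_i+n)w_0^{(i)})\bigr)$ are exactly the zero-mode mFOIs obtained from the linear mosquito balance plus the mass constraint (equivalently, $x_0 = M g_2(g_2+n)w_0^{(2)}/\bigl(ac(Q_2 n - (g_2+n)w_0^{(2)})\bigr)$ is the zero mode of $I_M$); the high-$\omega$ averaging replaces each Volterra convolution of a constant mode with a factor $C_0$ while the $\pm 1$ modes contribute $o(1)$; and $A(w_0^{(2)})$ is precisely $w_0^{(1)}$ recovered by inverting $\ln(1-x_0/M) = abC_0 u\,w_0^{(1)}/(N+Mu) + abC_0 w_0^{(2)}/M$, which your Step 2 describes correctly (note the sign bookkeeping $-C_0 = \abs{C_0}$ since $f<0$). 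Two points of imprecision are worth flagging. First, the two terms indexed by $u\pm v$ are not the output of a ``small-$v$ expansion'' in the usual sense --- that would collapse both to $u$ and eliminate $v$ from the statement; rather, $L$ averages the instantaneous $\Phi_1(t)/(ac) = \bigl(I_N + (u\pm v)I_M\bigr)/\bigl(N+(u\pm v)M\bigr)$ sampled at the two extreme phases $\cos\omega t = \pm 1$, which agrees with the true zero mode of $\Phi_1$ only up to $O(v^2)$; this is where the hypothesis that $\abs{v}$ is small enters, and your description of the numerators as just $N - N\exp(\cdot)$ omits the $(u\pm v)x_0$ contribution that visibly appears in \eqref{eqforw2}. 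Second, your ``main obstacle'' paragraph is well judged: the oscillatory-integral estimate is legitimate because $f$ is integrable and monotone (so $\int_0^t e^{\pm\mathbf{i}\omega\tau}f(t-\tau)\,d\tau \to 0$ as $\omega\to\infty$), but the uniform control of the $\epsilon$-order remainders in \eqref{coeff} is indeed the part that neither this paper nor your sketch actually supplies, and it is the honest content of ``closely approximated'' in the theorem.
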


Simultaneously, one can analytically approximate the values of $w^{(1)}_0$, $w^{(1)}_{-1}, w^{(1)}_{1}$, $w^{(2)}_{1},$ and $w^{(2)}_{-1}$ from the value of $w^{(2)}_{0}$, as described in Appendix 1 of \cite{pastpaper}. 

\begin{rem} Though the approximations in Theorem \ref{thm1} cannot be applied directly in the case of the fitted parameter values computed in Section \ref{modfitting} (for which the oscillating model solutions are in fact seen to decay towards a disease-free equilibrium (DFE), rather than approaching the periodic orbits described above), these approximations are considerably more accurate in cases where $\eta = 0$ and the endemic limiting solutions are stable \cite{pastpaper}. Moreover, even in cases where the DFE is stable, the Fourier decompositions in Equation \eqref{coeff} remain useful, and are employed in particular throughout Section \ref{mda}, to determine optimal timing for MDA across individuals in $\mathcal{M}$.

\end{rem}
\section{Model fitting for forest-village transmission}\label{modfitting}

To estimate values for the parameters in Equation \eqref{eq:humsimplify}, we first used PlotDigitizer to electronically extract monthly case data for \textit{P. vivax} in Vietnam from a time series generated in 2018 by Wangdi \textit{et al.}\cite{wangdi2018analysis} (see the top row of Figure 7 therein). Specifically, we narrowed the time interval in \cite{wangdi2018analysis} to the period between August 2011 and February 2015, during which there were four annual peaks of \textit{P. vivax} prevalence (Figure 4(a), green dots). 

When decomposing the full time-series data in \cite{wangdi2018analysis}, the authors found that only part of the fluctuation in monthly incidence could be attributed to seasonal effects of mosquito demographics, or to the general (country-wide) downward shift in malaria cases. The data that remained was highly noisy, but oscillated with a period of approximately one year (see the bottom row of Figure 7 in \cite{wangdi2018analysis}). Given that forest travel in the GMS is seasonally patterned, with peaks typically in midsummer (e.g., as shown in Rerolle \textit{\textit{et al.}} \cite{rerolle2021population}, which plotted the average number of days spent in Laos Champasak forests per month), we hypothesize that movement-associated fluctuations in prevalence could account for much of the remaining oscillations in the data in \cite{wangdi2018analysis}. 

From the solutions of our model in Equation \eqref{eq:humsimplify} (computed numerically), we then derive the \textit{incidence functions} $J_N(t), J_M(t)$, which give the \textit{P. vivax} incidence among stationary and moving populations, respectively, at time $t$. We first notice that, using Equation (2) in Anwar \textit{et al.} \cite{anwar2023optimal} (which gives the equation for the proportion of infectious individuals in a single stationary population, in the presence of superinfection), the \textit{net outflow} of primary and secondary cases in $\mathcal{N}$ is given by \[\pa{p_1^{(\mathcal{N})}(t) + p_2^{(\mathcal{N})}(t)}I_N.\] Here, $p_1^{(\mathcal{N})}(t)$ is the probability that a stationary individual has exactly one primary infection and no hypnozoites (meaning that this individual would leave compartment $I_N$ and transition to a susceptible, non-infectious state upon recovery). Similarly, $p_2^{(\mathcal{N})}(t)$ is the probability that a stationary individual has both exactly one primary infection and latent hypnozoites, meaning that this individual would transition from $I_N$ to a \textit{liver-infected} (but non-infectious) state upon recovery. 

Analogously, we can define probabilities $p_1^{(\mathcal{M})}(t)$ and $p_2^{(\mathcal{M})}(t)$ for a moving individual, such that the net outflow of primary and secondary cases in $\mathcal{M}$ becomes \[\pa{p_1^{(\mathcal{M})}(t) + p_2^{(\mathcal{M})}(t)}I_M.\] 

From Section 2.2.3 and Equation (21) in Anwar \textit{et al.} \cite{anwar2023optimal}, we derive exact expressions for $p_1^{(\mathcal{N})}(t) + p_2^{(\mathcal{N})}(t)$ and $p_1^{(\mathcal{M})}(t) + p_2^{(\mathcal{M})}(t)$. In particular, we have that 

\be \begin{aligned} & p_1^{(\mathcal{N})}(t) + p_2^{(\mathcal{N})}(t) = \frac{N - I_N(t)}{I_N(t)}\pa{\int_0^t \lambda_1(\tau) f^*(t - \tau) \, d\tau}; \\ & p_1^{(\mathcal{M})}(t) + p_2^{(\mathcal{M})}(t) = \frac{M - I_M(t)}{I_M(t)}\pa{\int_0^1 \Big[A_1(\tau)\lambda_1(\tau) + A_2(\tau) \lambda_2(\tau)\Big] f^*(t - \tau) \, d\tau}, \end{aligned} \ee
where $f^*(t)$ is the function satisfying \[f^*(t):= \frac{e^{-\gamma t} + \nu p_A(t)}{\pa{1 + \nu p_A(t)}^2}.\]
We can now define the incidence functions $J_N$ and $J_M$, as the sum of the derivatives of $I_N, I_M$, respectively, with the net outflows of cases in both populations. In particular, we write that 

\be\label{incidence_derivation} \begin{aligned} & J_N(t) = \gamma (p_1^{(\mathcal{N})}(t) + p_2^{(\mathcal{N})}(t))\,I_N + I_N'(t) = \gamma (p_1^{(\mathcal{N})}(t) + p_2^{(\mathcal{N})}(t))\, I_N + \pa{I_N(t) - N} \times \\ & \Biggr[\frac{-ab \, I_{m1} (t)}{N +  (u + v\cos \omega t) \, M} + ab\int_{0}^{t}\pa{\frac{I_{m1} (\tau)}{N +  (u + v\cos \omega \tau) \, M}\, f'(t - \tau)} d \tau\Biggr]; \\ \\ & J_M(t) = \gamma (p_1^{(\mathcal{M})}(t) + p_2^{(\mathcal{M})}(t)) \, I_M + I_M'(t) = \gamma (p_1^{(\mathcal{M})}(t) + p_2^{(\mathcal{M})}(t))\, I_M + \ \\ & \pa{I_M(t) - M} \times\Biggr[-\,ab \pa{\frac{I_{m1}(\tau)(u + v \cos \omega \tau)}{N + (u + v \cos \omega t)M}  \, + \frac{I_{m2}(\tau)((1 - u) - v\cos \omega \tau)}{((1 - u) - v\cos \omega \tau)M}} + \\ & ab\int_{0}^{t}  \pa{\frac{I_{m1}(\tau)(u + v \cos \omega \tau)}{N + (u + v \cos \omega t)M}  \, + \frac{I_{m2}(\tau)((1 - u) - v\cos \omega \tau)}{((1 - u) - v\cos \omega \tau)M}} f'(t - \tau)\, d\tau\Biggr]. \end{aligned} \ee 
 For every fixed set of parameter values, we can numerically solve for $J_N(t)$ and $J_M(t)$ using our solutions for $I_M, I_N, I_{m1},$ and $I_{m2}$, which are in turn derived using the integro-differential equations (IDE) solver constructed in Elgart \textit{et al.} \cite{pastpaper} (see Section 2.5, \cite{pastpaper}).  

We now assume values for a subset of the parameters (including $M, \omega, u$, and $v$, along with parameters considered independent of location, namely $b, c, n, \alpha, \mu, \gamma,$ and  $\nu$), as shown in Table \ref{parameters}, and vary the remaining parameters, namely $a, \tilde{g}_1, \tilde{g}_2, Q_1, Q_2,$ and $\eta$. In MATLAB, we use the \ttfamily{lsqcurvefit}\rmfamily{} and \ttfamily{nlpredci}\rmfamily{} functions to fit the modeled incidence to the time-series data (where \ttfamily{lsqcurvefit}\rmfamily{} minimizes the squared residuals, and \ttfamily{nlpredci}\rmfamily{} calculates the 95\% confidence interval associated with the fit). To prevent the fitting algorithm from concluding its search at local minima in parameter space, MATLAB's \ttfamily{multistart}\rmfamily{} function is used to compute initial values for the fitted parameters, implementing the fitting algorithm in succession from 25 scattered parameter vectors, and comparing the results.

\begin{figure}[h!] 
\centering
\includegraphics[width=\textwidth]{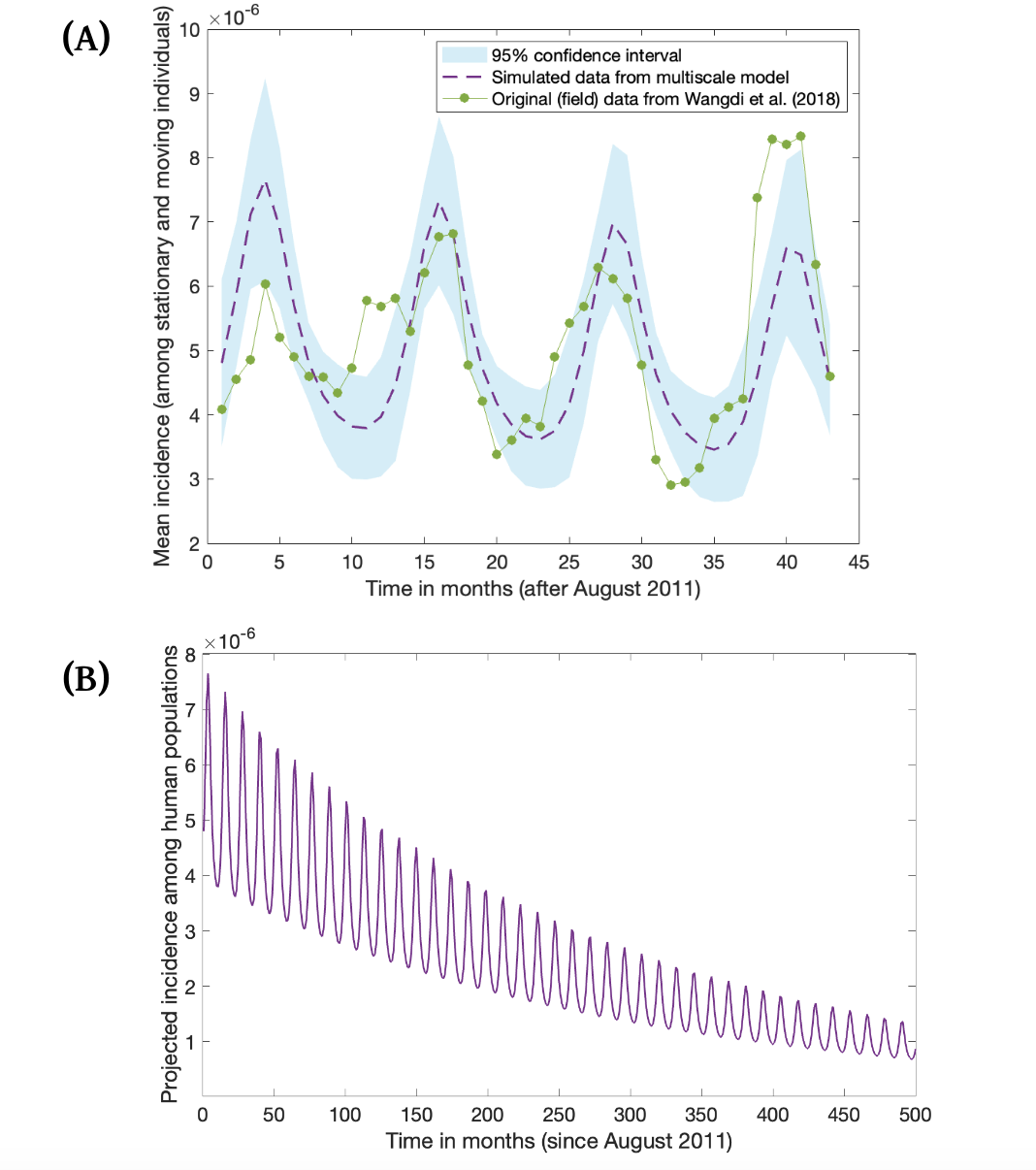}
\caption{\small (a) Fitted model (dashed purple line), original data (solid green line) and 95\% confidence intervals (shaded in cyan). The fitted model represents population-wide monthly incidence (given by $J_N(t) + J_M(t)$, where the functions $J_N, J_M$ are defined in Equation \eqref{incidence_derivation}), as a weighted average for people in $\mathcal{M}$ and $\mathcal{N}$, while the original time-series data was extracted from Wangdi \textit{\textit{et al.}} \cite{wangdi2018analysis}, and represents \textit{P. vivax} malaria cases in Vietnam between August 2011 and January 2015. The parameters of the fitted model are reported in Table \ref{parameters}. (b) Projection of fitted mean incidence across a $40$-year time-frame.} \label{ModelFitting}
\end{figure}

Figure 4(a) illustrates the original time-series from Wangdi \textit{et al.} \cite{wangdi2018analysis}, the output of the fitted multiscale model (represented by a dashed purple line), as well as the 95\% prediction interval of the fit (represented by the cyan shaded area). Though we do not perform formal identifiability analysis here, we discuss future work in this direction throughout Section \ref{discussion}.

\begin{table}[h!]\label{table1}
\centering\begin{tabular}{| m{1.5em} || m{15em} | m{7.5cm} | } \hline
  & \centering Definition & Assumed/Fitted Value \\ [0.1ex] 
 \hline\hline
  $a$ & 
  Mean daily mosquito bite rate &  $0.0935$ days$^{-1}$ (fitted)  \\ \hline
  $b$ & 
  Probability of mosquito to human transmission & 
      0.5* (assumed, \cite{smith2010quantitative})   \\ \hline
  $c$ & 
  Probability of human to mosquito transmission &  0.23* (assumed, \cite{bharti2006experimental}) \\ \hline
  $\eta$ & 
  Seasonality amplitude in Equation \eqref{eq:humsimplify:seasonality} &   0.5306* (fitted)   \\ \hline
  $\tilde{g}_1$ & 
  Patch $1$ baseline mosquito demography rate &   $0.1983$ days$^{-1}$ (fitted)   \\ \hline
  $\tilde{g}_2$ & 
  Patch $2$ baseline mosquito demography rate &  $0.0993$ days$^{-1}$ (fitted)  \\ \hline
  $n$ & 
  Mosquito sporogyny rate & $1/12$ days$^{-1}$ (assumed, \cite{gething2011modelling})  \\ \hline
  $Q_1$ & 
  Ratio of mosquitoes on Patch $1$ to the number of humans across both patches &   0.001* (fitted)   \\ \hline
  $Q_2$ & 
  Ratio of mosquitoes on Patch $2$ to the number of humans across both patches &  0.0931* (fitted)  \\ \hline
  $M$ & 
  Proportion of moving individuals &  0.1* (assumed from western Cambodia data, including \cite{kheang2021cambodia} and \cite{tripura2017submicroscopic}, which report 5.7\% and 15.6\% values, respectively, for the proportion of forest-goers in a village community; later varied)  \\ \hline
$N$ & 
Proportion of stationary individuals &  0.9* (derived from the value of $M$)   \\ \hline
$u$ & 
$A(\pi/(2\omega))$, where $A(t) = u + v \cos \omega t$ is the density of moving humans on Patch $1$ at time $t$  &  $0.5$ days$^{-1}$ (assumed) \\ \hline
$v$ & 
$A(0) - A(\pi/(2\omega))$ &  $-0.25$ days$^{-1}$ (assumed) \\ \hline
$\omega$ & 
$2\pi\, \times$ the frequency of human movement & $\frac{2\pi}{365}$ days$^{-1}$ (assumed, chosen due to the travel patterns mentioned in \cite{bannister2019forest}) \\ \hline
$\alpha$ & 
Hypnozoite activation rate & 1/332 days$^{-1}$ (assumed, \cite{white2014modelling})  \\ \hline
$\mu$ & 
Hypnozoite death rate & 1/425 days$^{-1}$ (assumed, \cite{white2014modelling})  \\ \hline
$\gamma$ & 
Blood-infection clearance rate & 1/60 days$^{-1}$ (assumed, \cite{collins2003retrospective})  \\ \hline
$\nu$ & Mean number of hypnozoites established in a bite & $5$* (assumed, \cite{white2016variation})  \\ \hline
\end{tabular}
\caption{List of literature-derived and fitted values for the parameters in the three-scale model defined by Equation \eqref{eq:humsimplify}. Of the assumed parameters, $\omega, u, v$, and $M$ are specific to the forest travel patterns in the GMS; values for the remaining non-fitted parameters were first assembled in Anwar \textit{\textit{et al.}} \cite{Anwar} (and previously used in \cite{pastpaper}). Original sources for each of the non-fitted parameters are listed in parentheses; asterisks next to a parameter value indicate the non-dimensionality of the constant.   }\label{parameters}
\end{table}

The fitted model parameters in Table \ref{parameters} suggests a considerable difference between the population sizes and lifespans of mosquitoes in the forest and village patches: in particular, the Patch 1 mosquito population (captured by the density parameter $Q_1 = 0.001$) is over $90$ times smaller than the population of Patch $2$ (captured by the density $Q_2 = 0.0931$), with an estimated lifespan of only $1/\tilde{g}_1 \approx 5.04$ days for Patch 1 mosquitoes, while Patch 2 mosquitoes are estimated to survive for an average of approximately $1/\tilde{g}_2 \approx 10.07$ days. This suggests effective mosquito control in villages, indicating that insecticidal nets and indoor residual spraying have limited both mosquito-to-human contact and mosquito presence. 

Moreover, we observe that the fitted incidence in Figure 4(a) decreases consistently over the four-year interval shown. Projecting the solutions for the mean incidence over several more decades implies that the system approaches the DFE, with incidence decreasing seven-fold over the forty-year period following August 2011 (Figure 4(b)). This sustained ebb in caseload aligns with the broader trend towards malaria eradication noted in Section \ref{Intro}.

\section{Modelling non-pharmaceutical and pharmaceutical malaria interventions}

Having derived reasonable parameters for the multiscale model in the case of a large population in the GMS, we now apply the model to investigate best-possible \textit{P. vivax} interventions for forest-adjacent areas in the region. 

We begin with the MDA of antimalarials, which has been shown to consistently reduce malaria incidence in small communities across Vietnam, Cambodia, and Laos \cite{von2019impact}. Since a traveling individual's exposure to \textit{P. vivax} may vary considerably over the course of a season, understanding the relationship between MDA timing and the effect of a dose may prove significant in improving MDA efficacy. To the best of our knowledge, however, no mathematical modelling work has considered the quantitative problem of optimizing MDA times for traveling forest-goers specifically, so as to maximally reduce the risk of infection. 

As such, we first consider a one-dose focal MDA scheme for the moving group in our metapopulation model, which involves both hypnozoite- and merozoite-targeting medications (e.g., an artemisinin-based combination therapy). We then derive the best-possible MDA time given a collection of simplifying assumptions (requiring, in particular, that the efficacy of the antimalarial drug be sufficiently high, and that the period of travel be sufficiently short). 
\subsection{Optimal time of MDA: analytical results}\label{mda}

We assume that every individual in $\mathcal{M}$ receives one round of MDA at a fixed population-wide time $s$, where $0 < s < t$. The efficacy of this MDA round is associated with $p_{\textnormal{blood}}$ and $p_{\textnormal{rad}}$, the probability that a blood-stage infection undergoes clearance upon administration and the probability that a given hypnozoite dies following antimalarial treatment, respectively. Both clearance and hypnozoite death are assumed to occur instantaneously. 

We hypothesize that the effect of antimalarial distribution will depend on the proportion of moving individuals on each patch at time $s$ (since these quantities determine the mean population FORI at any given moment), and will therefore vary cyclically over each movement period (of length $2\pi/\omega$). In this section, we seek to find the best-possible value of $s$ within a single movement period, so as to minimize the mean prevalence of erythrocytic malaria among forest-going individuals over a specified interval of time. 

We first formalize the optimization problem. Using Equation (31) in Mehra \textit{et al.} \cite{Mehra}, we determine that, in the presence of MDA, the PGF capturing within-host dynamics for an individual with periodic movement function $B(t)$ can be written as

\begin{equation}\label{eq:pgfmovingmda}
\begin{aligned} 
& G_{M}(\tilde{z}_H, \tilde{z}_A, \tilde{z}_C, \tilde{z}_D, \tilde{z}_P, \tilde{z}_{PC}) := \mathbb{E} [\prod_{j \in J} \tilde{z}_{j}^{N_j(t)}] = \\ &  
\underbrace{\exp \pa{\int_0^s  \Lambda(\tau)  \pa{\frac{\pa{\tilde{z}_{P}e^{-\gamma(t - \tau)} + (1 - (1-p_{blood})e^{-\gamma(t - \tau)})\tilde{z}_{PC}}}{1 + v\pa{1 - \sum_{j \in J_h} \tilde{z}_{j} \cdot p^r_{j}(t - \tau, s - \tau)}} - 1} \hspace{3pt} d\tau}}_{\textnormal{before MDA}} \times  \\ & \underbrace{\exp \pa{\int_s^t \Lambda(\tau) \pa{\frac{\pa{\tilde{z}_{P}e^{-\gamma(t - \tau)} + (1 - e^{-\gamma(t - \tau)})\tilde{z}_{PC}}}{1 + v\pa{1 - \sum_{j \in J_h} \tilde{z}_{j} \cdot p_{j}(t - \tau)}} - 1} \hspace{3pt} d\tau}}_{\textnormal{after MDA}},
\end{aligned}
\end{equation}
where for brevity we introduce the shorthand \[\Lambda(\tau):= \lambda_{1}(\tau)B(\tau) + \lambda_{2}(\tau)\pa{1 - B(\tau)},\] and $J_h$, as in Section \ref{withinhost}, represents the set $\{H, A, C, D\}$ of state compartments for a single hypnozoite. Moreover, the probabilities $p_H^r(t, s), p_A^r(t, s), p_C^r(t, s), p_D^r(t, s)$ are analogues of the associated probabilities $p_H(t), p_A(t), p_C(t)$, and $p_D(t)$ after MDA at time $s$ (as derived in Equations (23)-(27) of Mehra \textit{et al.} \cite{Mehra}).

Using Equation (68) in Mehra \textit{et al.} \cite{Mehra}, it follows that the probability of an individual with movement function $B(t)$ to be infectious at time $t$ is \be \begin{aligned}\label{pIts} p_I(t,s) = 1 - P\pa{\mathcal{N}_A(t) + \mathcal{N}_P(t) = 0} = 1 - G_M\pa{1, 0, 1, 1, 0, 1} = \\ 1 - \exp\pa{\int_s^t \Lambda(\tau) \, f(t - \tau)\ d\tau}\times \exp\pa{\int_0^s \Lambda(\tau) \, r(t - \tau, s - \tau) \ d \tau},\end{aligned} \ee
where $f(t)$ is defined as in Equation \eqref{f_Eq} and \be \begin{aligned} \label{req} r(t, s) = \frac{1 - (1-p_{\textnormal{blood}})e^{-\gamma t}}{1 + \nu((1 - p_{\textnormal{blood}})e^{-\gamma(t-s)}p_A(s) + (1 - p_{\textnormal{rad}})\pa{p_A(t) - e^{-\gamma(t-s)}p_A(s)})} - 1 \\ = \frac{1 - (1-p_{\textnormal{blood}})e^{-\gamma t}}{1 + \nu((1 - p_{\textnormal{rad}})p_A(t) + (p_{\textnormal{rad}} - p_{\textnormal{blood}})e^{-\gamma(t-s)}p_A(s))} - 1. \end{aligned} \ee  

Moreover, for any given $t$ and $s < t$, we can show that $r(t, s) < 0$, in that \be \begin{aligned} p_{A}(t) - e^{-\gamma(t-s)} p_A(s) = \frac{\alpha}{\alpha + \mu - \gamma}\pa{e^{-\gamma t} - e^{-(\alpha + \mu) t} - e^{-\gamma t} + e^{-(\alpha + \mu - \gamma) s - \gamma t}} = \\ \frac{\alpha}{\alpha + \mu - \gamma}\,e^{-(\alpha + \mu) t}\pa{e^{(\alpha + \mu - \gamma) (t-s)} - 1} > 0, \end{aligned} \ee (where we note that $e^{(\alpha + \mu - \gamma) (t-s)} - 1 < 0$ if and only if $\alpha + \mu - \gamma < 0$).
Moreover, again using the relation $p_A(t) > e^{-\gamma(t - s)} p_A(s)$, we find that \be \begin{aligned} \abs{r(t,s)} = 1 - \frac{1 - (1-p_{\textnormal{blood}})e^{-\gamma t}}{1 + \nu((1 - p_{\textnormal{rad}})p_A(t) + (p_{\textnormal{rad}} - p_{\textnormal{blood}})e^{-\gamma(t-s)}p_A(s))} < \\ 1 - \frac{1 - e^{-\gamma t}}{1 + \nu((1 - p_{\textnormal{rad}})p_A(t) + (p_{\textnormal{rad}} - p_{\textnormal{blood}})p_A(t))} = \\ 1 - \frac{1 - e^{- \gamma t}}{1 + \nu (1 - p_{\textnormal{blood}}) p_{A}(t)} < 1 - \frac{1 - e^{- \gamma t}}{1 + \nu p_{A}(t)} = |f(t)|, \end{aligned} \ee 
so \[|r(t, s)| < |f(t)| < 1.\]

As such, we can apply Theorem \ref{thmEq} with the $f(t-\tau)$ term in the integral replaced by the piecewise-defined term \[ \begin{cases} 
      r(t - \tau, s - \tau) & \tau < s \\
      f(t - \tau) & s\leq \tau \leq t.
   \end{cases}
\]
This yields that the proportion of infectious moving individuals at time $t$ following MDA at time $s$ is \be \begin{aligned} \tilde{I}_M(t, s) = 1 - \exp\pa{\int_s^t \pa{\lambda_{1}(\tau)A_1(\tau) + \lambda_{2}(\tau)A_2(\tau)} \, f(t - \tau)\ d\tau}\times \\ \exp\pa{\int_0^s \pa{\lambda_{1}(\tau)A_1(\tau) + \lambda_{2}(\tau)A_2(\tau)} \, r(t - \tau, s - \tau) \ d \tau}.\end{aligned} \ee

Making the assumption that $s >> 1$ (i.e., as $s$ tends to infinity), we note that $\tilde{I}_M(t, s)$ is limit-periodic in $t$ for a fixed $s$. In particular, when $s >> 1$ is held constant, \[\lim_{t-s\to\infty} \exp\pa{\int_0^s \pa{\lambda_{1}(\tau)A_1(\tau) + \lambda_{2}(\tau)A_2(\tau)} r(t - \tau, s - \tau) \ d \tau} = 1,\] since $\lim_{k \to \infty} r(k, l) = 0$ for all fixed $l > 0$. As such, \[\lim_{t - s \to \infty} \tilde{I}_M(t, s) = \lim_{t - s \to \infty} \Biggr[1 - \exp\pa{\int_s^t \pa{\lambda_{1}(\tau)A_1(\tau) + \lambda_{2}(\tau)A_2(\tau)} f(t - \tau)\ d\tau}\Biggr],\] which is a limit-periodic function (by the convergence of $f$ to $0$ as $t$ tends to infinity), approaching a period of $\frac{2\pi}{\omega}$. To determine an ``optimized" value for $s$, it thus suffices to minimize the \textit{aggregate prevalence} function  
\be\label{optimization_problem} AP(s) = \int_0^{\frac{2 \pi}{\omega}} \tilde{I}_M(s + k, s) \ dk\ee
over the interval $s \in [T, T + 2\pi/\omega]$, where $T >> 1$ is measured in days. This quantity is equivalent to an aggregate erythrocytic malaria prevalence in the first annual cycle (of length $2\pi/\omega$ days) following antimalarial administration.

We obtain the solution to the optimization problem using Theorem \ref{thmMDA}:
\begin{thm}\label{thmMDA}
When $\omega$ is sufficiently large and $p_{\textnormal{blood}}, p_{\textnormal{rad}}$ are sufficiently close to $1$, the optimal annual time of MDA relative to the start-of-movement-period time $T$ is the unique solution of 
\be\label{soleq} \begin{aligned} & \tan \,(\omega (s-T))  =  -\frac{Mu(w^{(1)}_{-1} + w^{(1)}_{1}) + (N+Mu)(w^{(2)}_{-1} + w^{(2)}_{1})}
{Mu (w^{(1)}_{1} - w^{(1)}_{-1})\textbf{i}  + (N+Mu)(w^{(2)}_{1} - w^{(2)}_{-1})\textbf{i} }
\end{aligned}
\ee that satisfies the constraints 
\be\label{constraints} \begin{aligned}
-\pa{Mu(w^{(1)}_{-1} + w^{(1)}_{1}) + (N+Mu)(w^{(2)}_{-1} + w^{(2)}_{1})} \sin \,(\omega (s-T)) + \\ \pa{Mu (w^{(1)}_{1} - w^{(1)}_{-1})\textbf{i}  - (N+Mu)(w^{(2)}_{1} - w^{(2)}_{-1})\textbf{i} } \cos \,(\omega (s-T)) < 0
\hspace{10pt} 
\end{aligned}
\ee and $s - T \in [0, 2\pi/\omega),$ \\
\end{thm}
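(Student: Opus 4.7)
My plan is to attack Theorem \ref{thmMDA} by reducing the aggregate prevalence $AP(s)$ to a Fourier-expandable form in the high-efficacy limit, extracting its leading $s$-dependence at large $\omega$, and then applying the first- and second-order optimality conditions. First I observe that $r(t,s) \to 0$ as $(p_{\textnormal{blood}}, p_{\textnormal{rad}}) \to (1,1)$, since the numerator and denominator of \eqref{req} both collapse to $1$ in this limit. Thus the pre-MDA factor $\exp\pa{\int_0^s \Lambda(\tau) r(t-\tau, s-\tau)\, d\tau}$ in $\tilde{I}_M(t,s)$ is uniformly close to $1$, and after the substitution $\tau = s+u$,
\[ \tilde{I}_M(s+k,s) \approx 1 - \exp\pa{\int_0^k \Lambda(s+u)\, f(k-u)\, du}, \qquad \Lambda(\tau) := \lambda_1(\tau) A_1(\tau) + \lambda_2(\tau) A_2(\tau). \]
I then substitute the limit-periodic ansatz \eqref{coeff} for $I_{m1}, I_{m2}$ into \eqref{lambdaeq} and drop the $O(v)$ corrections (noting that $A_2\lambda_2 = ab\,I_{m2}/M$ is exactly $v$-independent), obtaining $\Lambda(\tau) \approx C_0 + C_{-1}e^{-i\omega\tau} + C_1 e^{i\omega\tau}$ with $C_j := \frac{abu}{N+uM} w^{(1)}_j + \frac{ab}{M} w^{(2)}_j$.

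The inner integral then decomposes as
\[ \int_0^k \Lambda(s+u)\, f(k-u)\, du = C_0 F(k) + C_{-1} e^{-i\omega s} h_-(k) + C_1 e^{i\omega s} h_+(k), \]
with $F(k) := \int_0^k f(k-u)\, du$ and $h_\pm(k) := \int_0^k e^{\pm i\omega u} f(k-u)\, du$. Integration by parts on $h_\pm$, using $f(0) = -1$, shows $h_\pm(k) = O(1/\omega)$ uniformly in $k$, so I may linearize the exponential in its oscillating terms. Integrating over $k \in [0, 2\pi/\omega]$ and using $C_{-1} h_- = \overline{C_1 h_+}$ (from reality of $\Lambda$) yields
\[ AP(s) \approx \textnormal{const} - 2 \operatorname{Re}\pa{C_1 e^{i\omega s} I_+}, \qquad I_+ := \int_0^{2\pi/\omega} e^{C_0 F(k)} h_+(k)\, dk. \]
Swapping the order of integration in $I_+$, approximating $e^{C_0 F(k)} \approx 1$ on the short interval of length $2\pi/\omega$, and using the identities $\int_0^{2\pi/\omega} e^{i\omega u}\, du = 0$ together with $\int_0^{2\pi/\omega} u\, e^{i\omega u}\, du = -2\pi i/\omega^2$ gives $I_+ = -2\pi i/\omega^2 + O(\omega^{-3})$.

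Because $I_+$ is purely imaginary at leading order, $\operatorname{Re}(C_1 I_+) \propto \operatorname{Im}(C_1)$ and $\operatorname{Im}(C_1 I_+) \propto -\operatorname{Re}(C_1)$. The first-order condition $AP'(s) = 0$ then reduces to $\tan(\omega s) = \operatorname{Re}(C_1)/\operatorname{Im}(C_1)$. Using $w^{(i)}_1 + w^{(i)}_{-1} = 2\operatorname{Re}(w^{(i)}_1)$ and $(w^{(i)}_1 - w^{(i)}_{-1})\,i = -2\operatorname{Im}(w^{(i)}_1)$, cancelling the common factor $ab/[M(N+uM)]$ in both parts of $C_1$, and noting that $\tan(\omega s) = \tan(\omega(s-T))$ whenever $T$ marks the start of a movement cycle (so $\omega T \equiv 0 \pmod{2\pi}$) produces exactly \eqref{soleq}. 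The second-derivative test $AP''(s) > 0$ becomes $\operatorname{Im}(C_1)\cos(\omega s) + \operatorname{Re}(C_1)\sin(\omega s) > 0$, which under the same substitutions yields \eqref{constraints} and selects the unique minimizer among the two solutions of the tangent equation on $[0, 2\pi/\omega)$. The principal hurdle will be verifying that the cascaded small-parameter expansions are mutually consistent: since the retained $s$-dependent term is only of order $|C_1|/\omega^2$, the neglected $O(v)$ corrections in $\Lambda$, the $O(1-p_{\textnormal{blood}}, 1-p_{\textnormal{rad}})$ residuals from $r$, the higher-harmonic terms suppressed by the first-order linearization of the exponential, and the $\epsilon$-order remainders from \eqref{coeff} must all be shown to be strictly subleading before the derived $\tan$ equation can be claimed as the true optimizer.
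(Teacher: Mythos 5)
Your proposal is correct and lands on exactly the paper's equations, but the mechanics differ enough from the paper's argument to be worth recording. The paper's proof first invokes a lemma stating that for large $\omega$ and high efficacy the quantity $q_I$ is small, so minimizing $AP(s)$ is equivalent to maximizing $\int_0^{2\pi/\omega} q_I(s+k,s)\,dk$; it then differentiates $q_I$ under the integral sign with respect to $s$, producing boundary terms such as $h(s)\,r(0,k)$ and $h(s+k)f(0)-h(s)f(k)$, and uses a second lemma ($|f'|\ll|f|$, $|\partial_s r|\ll|r|$) together with the decay of $r(s,\cdot)$ at large $s$ to discard the interior derivative terms, after which the Fourier substitution for $h$ yields the tangent equation. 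You instead send $p_{\textnormal{blood}},p_{\textnormal{rad}}\to 1$ at the outset so that $r\equiv 0$, linearize the exponential only in its $O(1/\omega)$ oscillatory part while retaining $e^{C_0F(k)}$, and then evaluate the key oscillatory integral explicitly, finding $I_+=-2\pi\mathbf{i}/\omega^2$ purely imaginary at leading order; the first- and second-order optimality conditions then give \eqref{soleq} and \eqref{constraints} directly, with $\tan(\omega(s-T))=\operatorname{Re}(C_1)/\operatorname{Im}(C_1)$ matching the paper's expression after the identities $w^{(i)}_1+w^{(i)}_{-1}=2\operatorname{Re}(w^{(i)}_1)$ and $(w^{(i)}_1-w^{(i)}_{-1})\mathbf{i}=-2\operatorname{Im}(w^{(i)}_1)$. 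What your route buys is a self-contained, explicitly computable phase condition (the pure imaginarity of $I_+$ is precisely why the optimum locks to the quadrature of the mosquito-prevalence oscillation) without needing differentiation under the integral or the paper's Lemma 4.1; what the paper's route buys is that it carries $r$ through the computation, so it can quantify, rather than merely assert, that the $p_{\textnormal{blood}},p_{\textnormal{rad}}$-dependence enters only through terms of order $O(w^{(1)}_{\pm1}+w^{(2)}_{\pm1})$. Note that your evaluation of $I_+$ still implicitly uses the slow variation of $f$ near $0$ (i.e.\ $F(x)\approx f(0)x=-x$ on an interval of length $2\pi/\omega$), which is the content of the paper's Lemma 4.2, so that ingredient is not actually avoided; and your closing caveat about verifying that the neglected $O(v)$, $O(1-p_{\textnormal{blood}})$, higher-harmonic, and $\epsilon$-order remainders are subleading applies equally to the paper's own derivation, which operates at the same level of rigor.
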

where the coefficients $w^{(1)}_{-1}, w^{(1)}_{1}, w^{(2)}_{-1}, w^{(2)}_{1}$ are as in Equation \eqref{coeff}.
Theorem \ref{thmMDA} is proved in Appendix II below. 

\begin{rem}
We notice here that the results of Theorem \ref{thmMDA} are most applicable when $I_{m1}(t), I_{m2}(t)$ are arbitrarily close to a stable limit cycle. As mentioned above, in the case of the fitted model parameters in Table \ref{parameters}, the solutions for $I_{m1}(t)$ and $I_{m2}(t)$ in fact approach the DFE, limiting the accuracy of the analytical approximations we compute in Section \ref{numericalResults} below. To demonstrate the validity of the approximations in the case where the solutions tend to periodic ones, we also consider a case where the value for the per-capita bite rate $a$ is higher, in Figure 7 of Section \ref{numericalResults}.
\end{rem}

\begin{figure}[h!]
\centering
\includegraphics[width=0.8\textwidth]{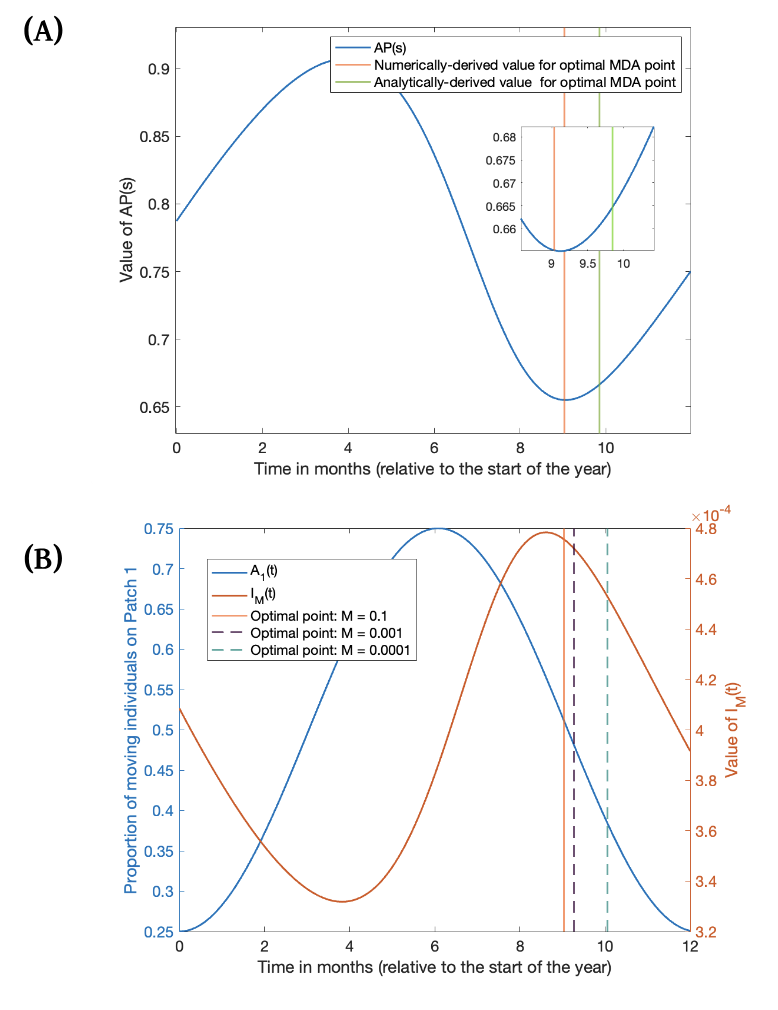}
      \caption{\small (a) Illustrates the value of $AP(s)$ from Equation \eqref{optimization_problem}, where $s$ ranges on the interval $[9855 \textnormal{ days}, 10220\textnormal{ days}]$ (equivalent to $[0 \, \textnormal{months}, 12 \, \textnormal{months}]$ after the start of the 27th year of the simulation), given parameters in Table \ref{parameters}. The numerical solution ($\approx 9.033$ months) for the minimum $s$-value is marked with a vertical orange line, while the analytical solution ($\approx 9.844$ months, derived using Equation \ref{soleq}), is marked with a vertical green line. The inner box contains a zoomed-in version of a section of the $AP(s)$ curve, illustrating the difference between the two obtained values for the optimal MDA timing. (b) Illustrates the proportion of moving individuals  on Patch 1 (blue axis) and the proportion of infectious moving individuals (red axis) as a function of time, with numerically-optimal MDA times marked for various values of $M$. }\label{fig5}
     \end{figure}
\subsection{Optimal time of MDA: numerical results}\label{numericalResults}

In this section, we consider the approximate ``optimal point" of MDA given by Theorem \ref{thmMDA} in relation to the periodic patterns of migration and disease prevalence (as modeled by Equations \eqref{A1A2} and \eqref{densityofmovinginf}, respectively). In particular, we find that, for the specific parameter values in Table \ref{parameters}, the optimal point of MDA distribution (when derived numerically) is reached when approximately half of forest-going individuals are still found on Patch 1, which aligns approximately with the peak transmission period for moving individuals.      

Setting the parameters equal to the values obtained after model fitting (see Table \ref{parameters}), we  minimized the value of $AP(s)$ in Equation \eqref{optimization_problem}. In particular, we numerically simulate the model in Equation \eqref{eq:humsimplify} (with the mosquito demography rates, $g_1(t)$ and $g_2(t)$, as in Equation \eqref{eq:humsimplify:seasonality}) to determine values for $I_{m1}(t)$ and $I_{m2}(t)$. Beginning with the 27th year of the simulation, we then evaluate $AP(s)$ (with $p_{\textnormal{blood}}, p_{\textnormal{rad}}$ set at \textit{baseline} values of $0.9$), such that $s$ varies over the period $[9855 \textnormal{ days}, 10220 \textnormal{ days}]$. In this case, the minimum value for $AP(s)$ falls approximately $9.033$ months after the beginning of the year ($s \approx 10130$ days, Figure 5(a)). 

Evaluating solutions to Equation \eqref{soleq} and selecting the value of $s$ that satisfied the constraints in Equation \eqref{constraints} yielded an analytical minimum at approximately $9.844$ months after the beginning of the year ($s \approx 10154$), resulting in an error of around three-and-a-half weeks (Figure 5(a)). This error is heightened by the decrease in malaria cases over time, which distorts the shape of the yearly prevalence curve from a sinusoidal one (Figure 5(b)). 

For the baseline parameter values, the optimal MDA time coincides with the point at which slightly more than 50\% of the moving individuals are found on Patch 1, such that there the net outflow from Patch 1 to Patch 2 is nearing a maximum (Figure 5(b), blue curve). This best-possible value for $s$ suggests the strategy of applying MDA immediately prior to the peak village departure period, determined across all migrating people in a population. This optimal time also falls approximately two weeks after the prevalence of infectious malaria peaks in the moving population (Figure 5(b), red curve).

Interestingly, the numerical simulation indicates that the peak in infectiousness (and therefore the optimal MDA time) occurs when the group-level FORI of individuals in $\mathcal{M}$ is average (with individuals divided evenly between the two patches), as opposed to at its highest. This phenomenon is likely due to the dependence of $I_M(t)$ on the convolution integrals over $\lambda_1(t)$ and $ \lambda_2(t)$ in Equation (\ref{eq:humsimplify}), rather than on the instantaneous FORI itself.

\begin{figure}[h]
     \centering
\includegraphics[width=0.8\textwidth]{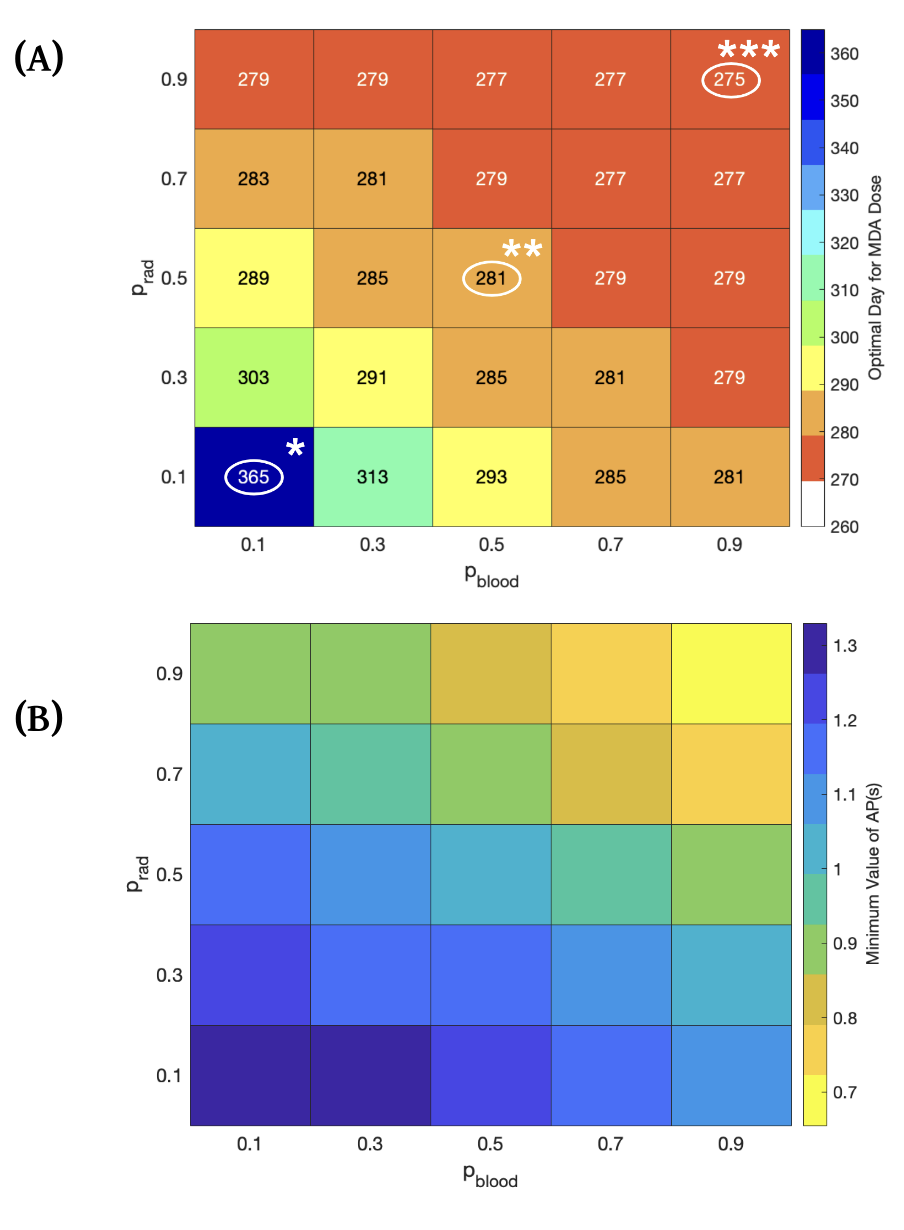}
     \caption{\small (a) Heatmap illustrating the optimal time of MDA administration (from Equation \eqref{soleq}) in days after the beginning of the $27$th simulated year. Here, we vary $p_{\textnormal{blood}},  p_{\textnormal{rad}} \in \{0.1, 0.3, 0.5, 0.7, 0.9\}$, and fix $M = 0.1$. Values from Table \ref{table2} are indicated with asterisks. (b) Heatmap showing the mean aggregate \textit{P. vivax} prevalence among moving individuals for values of $p_{\textnormal{blood}},  p_{\textnormal{rad}} \in \{0.1, 0.3, 0.5, 0.7, 0.9\},$ where $M = 0.1$, given one MDA round at time $s^*$ (computed using Equation \eqref{optimization_problem}). The prevalence is slightly more sensitive to changes in $p_{\textnormal{rad}}$, aligning with experimental consensus that the hypnozoite reservoir accounts for a considerable proportion of the total \textit{P. vivax} burden, \cite{white2014modelling}. }
     \end{figure}

\vspace{1 cm}

We now consider the dependence of the numerically computed optimal MDA time, denoted $s^*$, on the model parameters. In particular,  we observe that decreases in either the efficacy of the antimalarial medication or the proportion $M$ of traveling individuals delays the optimal MDA timing further from the start of the migration period (Figure 6(a), Table \ref{table2}). The dependence of $s^*$ on  $p_{\textnormal{blood}}$ and $ p_{\textnormal{rad}}$ falls as the antimalarial efficacy improves, corroborating the $p_{\textnormal{blood}}, p_{\textnormal{rad}}$-independent result in Theorem \ref{thmMDA} (Figure 6(a), columns 3-5 of Table \ref{table2}). As $M$ grows very small, the dependence of the optimal MDA timing on the value of $M$ increases dramatically. 

Furthermore, though $s^*$ depends largely only on $M, N,$ and $u$ when $p_{\textnormal{blood}}, p_{\textnormal{rad}}$ tend to $1$, our numerical simulations suggest that the lowest-possible aggregate prevalence among moving individuals (equivalent to the value of $AP(s^*))$ is strongly sensitive to changes in  $p_{\textnormal{blood}}$ and $p_{\textnormal{rad}}$ (Figure 6(b)). 

\vspace{10pt}

\begin{table}[h!]
\centering\begin{tabular}{| m{6em} || m{6em} | m{6em} | m{6em} | m{6em} |} \hline
  {\begin{center} Value of $M$ \end{center}} & {\begin{center} $p_{\textnormal{blood}} = 0.1$ \\ $p_{\textnormal{rad}} = 0.1$ \end{center}} & {\begin{center} $p_{\textnormal{blood}} = 0.5$ \\ $p_{\textnormal{rad}} = 0.5$ \end{center}} & {\begin{center} $p_{\textnormal{blood}} = 0.9$ \\ $p_{\textnormal{rad}} = 0.9$ \end{center}} & {\begin{center} $p_{\textnormal{blood}} = 0.99$ \\ $p_{\textnormal{rad}} = 0.99$ \end{center}}\\ [0.1ex] 
 \hline\hline
 \centering 0.0001 & \centering 365 & \centering 327 & \centering 315 & \hspace{2em} 311\\ [0.1ex] 
\hline
 \centering 0.001 & \centering 365 & \centering 285 & \centering 281 & \hspace{2em} 279\\ [0.1ex] 
\hline
\centering  0.1 & \centering 365* & \centering 281** & \centering 275*** & \hspace{2em} 275 \\ [0.1ex] 
\hline
\centering  0.9 & \centering 365 & \centering 281 & \centering 275 & \hspace{2em} 275\\ [0.1ex] 
\hline
\end{tabular}
\caption{\small The impact of varying $M$, the proportion of moving individuals in the human population, on the numerically-optimal timing of MDA for various values of $p_{\textnormal{blood}}, p_{\textnormal{rad}}$. As $M$ decreases, the optimal timing is increasingly delayed; similarly, as the values of $p_{\textnormal{blood}}, p_{\textnormal{rad}}$ fall, the best-possible day of antimalarial administration increases relative to the beginning of the year. The dependence of the optimal MDA timing on the efficacy of the antimalarials is negligible for sufficiently-high $p_{\textnormal{blood}}, p_{\textnormal{rad}}$ (values labeled with asterisks are also shown on Figure 6(a).)}\label{table2}
\end{table}

To demonstrate the accuracy of our analytical approximations in the case where the solutions tend towards periodic ones, we set $a = 0.2$ days$^{-1}$ (for which the endemic limit cycle is asymptotically stable), and compare the numerically- and analytically-obtained values for the optimal MDA timing (Figure \ref{EndemicApproximations}). In this case, the error incurred by the analytical approximations falls below eight days ($< 3\%$). 

\begin{figure}
 \centering
         \includegraphics[width=0.7\textwidth]{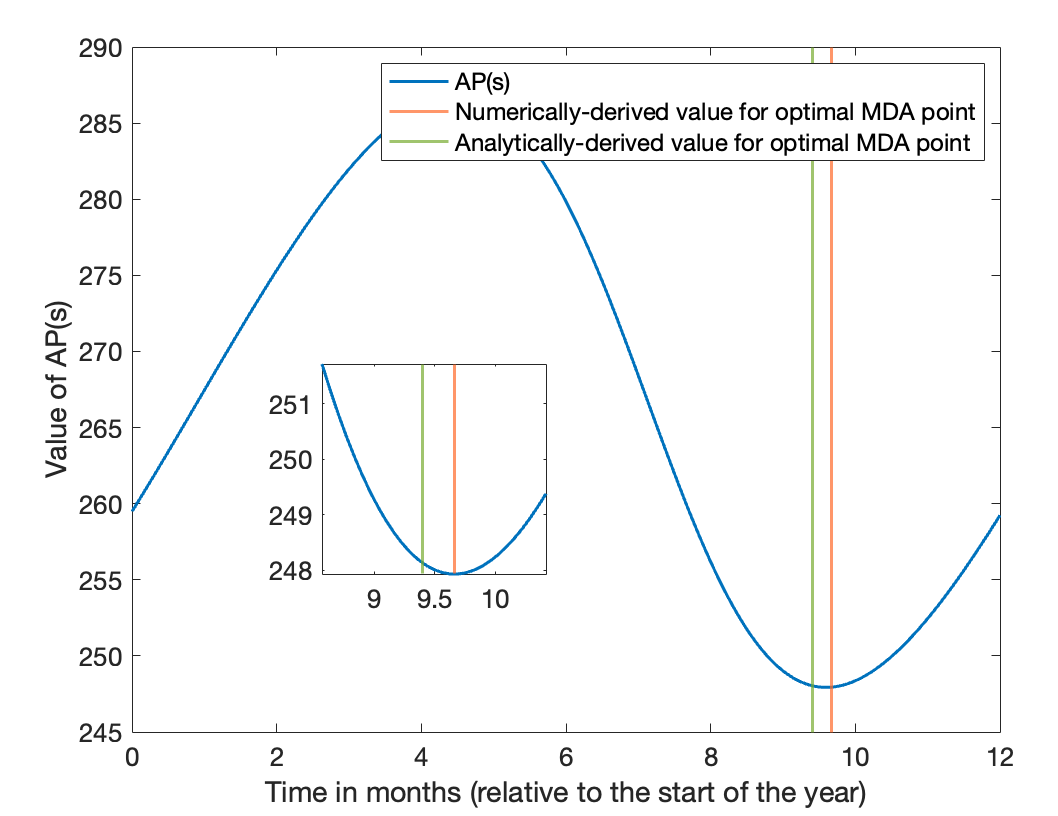}
         \caption{Comparison between the numerically and analytically-derived values for the optimal MDA time when $a = 0.2$ days$^{-1}$, computed using Equations \eqref{optimization_problem} and \eqref{soleq}, respectively.}\label{EndemicApproximations}
\end{figure}
\subsection{\textit{Anopheles}-based and non-pharmaceutical interventions}

In this section, we consider the impact of several mosquito-centered and non-pharmaceutical interventions (NPIs) on \textit{P. vivax} prevalence among stationary and moving individuals. We compare the following population-wide tools for malaria elimination in our numerical simulations:

\begin{enumerate}[label=(\roman*)]
    \item\label{inter1} \textit{Interventions targeting mosquito birth and death rates, the size of vector populations, and mosquito feeding behaviors.} This includes long-lasting insecticide-treated nets (LLINs), which impact the parameters $a, g_1, Q_1$; indoor residual spraying (IRS), which impacts $g_1, Q_1$ (we assume that IRS does not impact the mean bite rate per mosquito, as it kills rather than repels affected mosquitoes); spatial repellents (SPR), which impact $g_1, g_2, Q_1, Q_2, a$ (as SPR can be applied both in outdoor and indoor areas); and source reduction (SOR), which impacts $Q_1, Q_2$. Other methods falling into this category include changes to housing construction (e.g., elevating homes above ground level), \cite{charlwood2003raised}, which decreases $a$, and sterile insect techniques (SIC), which decrease  $Q_1, Q_2, g_1$ and $g_2$ \cite{klassen2009introduction}.
    \item\label{inter2} \textit{Interventions targeting mosquito sporogyny.} In recent years, several chemical interventions have been proposed to decrease the rate of  sporogyny ($n$), thereby forcing bottlenecks for \textit{P. vivax} parasites between the gametocyte and oocyst stages of development within the vector \cite{kamiya2022targeting}.  
\end{enumerate}
\vspace{0.4cm}
Our results indicate the relative importance of SPR, SOR, and housing construction changes for reducing the probability of infection among stationary and moving populations (Figure \ref{sensanalysis}). In the case of moving populations, the limit (periodic) prevalence $I^{*}_M(t)$ (defined in Equation \eqref{coeff}) is almost entirely independent of the parameters $g_1$ and $Q_1$, but strongly dependent on $g_2$ and $Q_2$ (Figures 8(a), 8(c)). 
 The limiting prevalence among stationary individuals, in contrast, exhibits much greater sensitivity to $Q_1$, but is nonetheless much more strongly dependent on $g_2$ than on $g_1$ (Figures 8(b),  8(d)). 
 
 Simultaneously, both $I_M^*$ and $I_N^*$ are considerably more dependent on the bite rate $a$ than on the sporogyny rate $n$ (Figures 8(e), 8(f)). The sensitivity of $I_M$ on the mosquito bite rate $a$ is especially striking: For instance, a bite rate of $a \approx 0.2$ days$^{-1}$ leads to over 70\% of the moving population becoming infectious at the limit cycle (since we set $M = 0.1$), compared to less than 0.01\% of the stationary population (Figures 8(e), 8(f)).
 
 In cases where a population contains a significant proportion of forest-going individuals, our results suggest a need to employ outdoor interventions (e.g., SOR), as well as interventions specifically targeting the mosquito bite rate (such as SPR). Even in populations with an overwhelming majority of stationary individuals (implying that epidemics tend to arise and spread on Patch 1, rather than Patch 2), SOR and SPR should be prioritized alongside LLINs and IRS, as the impact of changes in $g_2$ on $I_N$ far exceed the impact of changes in $g_1$.

\begin{figure}[h!]
\includegraphics[width=\textwidth]{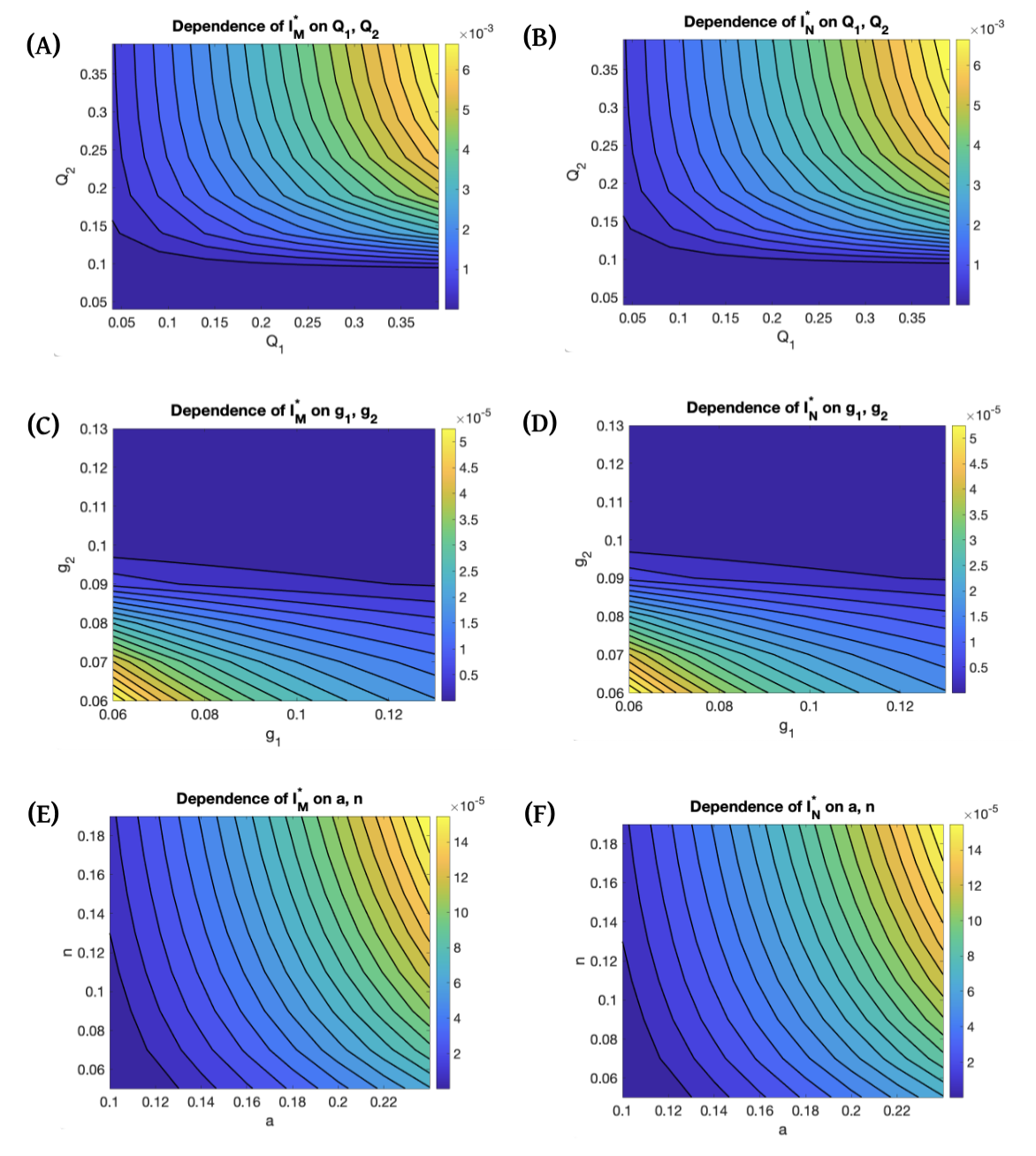}
  \caption{\small Two-dimensional contour plots that illustrate how \textit{P. vivax} prevalence in moving and stationary individuals ($I_M^*$ and $I_N^*$, respectively, simulated using Equation \ref{eq:humsimplify}) reacts to changes in [(a) and (b)] vector population size, modeled by parameters $Q_1, Q_2$, [(c) and (d)] vector demographics, modeled by parameters $g_1, g_2$, and [(e) and (f)] the bite rate and sporogony rate among all mosquitoes, modeled by parameters $a$ and $n$. Here, $Q_1, g_1$, and $a$ are plotted on the $x$-axes, while $Q_2, g_2$, and $n$ are plotted on the $y$-axes; warmer colors indicate higher levels of prevalence, while the constant dark purple color indicates complete or near-complete disease eradication. }\label{sensanalysis}
\end{figure}

\section{Discussion}\label{discussion}
In this paper, we extend our previous work on mathematical metapopulation models for \textit{P. vivax} in \cite{pastpaper} to consider the interplay between spatial structure, the hypnozoite accrual characteristic of \textit{P. vivax}, and the pharmaceutical and non-pharmaceutical interventions used to combat \textit{P. vivax} malaria. We focus on endemic \textit{P. vivax} in forest-fringe areas of Southeast Asia (e.g., Cambodia, Laos, and the Central Highlands region of Vietnam), where villages often contain a sizable “forest-going” population that works frequently in higher-risk forested areas \cite{jongdeepaisal2021acceptability}.

In particular, we fitted an expanded version of the three-scale metapopulation model developed in \cite{pastpaper} to data retrieved from Wangdi \textit{et al.}  \cite{wangdi2018analysis}, which reflects \textit{P. vivax} caseloads reported by the Vietnam National Institute of Malariology, Parasitology, and Entomology between 2011 and 2015. To better replicate real-life epidemiological conditions, we introduced seasonal variation in mosquito demography rates (Equation \eqref{eq:humsimplify:seasonality}) to the integro-differential equations system (Equation \eqref{eq:humsimplify}). The resulting model was then fitted to the data using a least-squares approach, with the addition of seasonality used to distinguish between the seasonal and movement-induced changes in the \textit{P. vivax} prevalence over time. The fitted parameters included a lower mean daily mosquito bite rate ($a \approx 0.0935$ days$^{-1}$) than that concluded in earlier works (e.g., Garrett-Jones \textit{et al.} \cite{garrett1964human}, determined a value of $0.21$ days$^{-1}$ in 1964); this suggests that the increased use of LLINs and other interventions in Vietnam may have lowered the mosquito proximity to human villages over the last several decades (Table \ref{parameters}). 

Additionally, and for similar reasons, the model fitting suggests a low proportion of mosquitoes on Patches 1 and 2 (relative to the total human population), with $Q_1$ and $Q_2$ equal to approximately $0.001$ and $0.0931$, respectively (Table \ref{parameters}). The significant difference between the value for $Q_2$ and that for $Q_1$ corroborates the observation that most mosquitoes are now concentrated within forests, as opposed to LLIN-heavy village areas, reinforcing experimental data \cite{dysoley2008changing}. As a result of these limitations on malaria transmission, the fitted solutions decayed rapidly towards the disease-free equilibrium, with model projections indicating a seven-fold decrease in mean caseload over the course of four decades. Finally, the simulated model supports the conclusion that forest migration has an annual period; observations in \cite{rerolle2021population} stress differences in migration patterning between the rainy and dry seasons of the year.

Assuming a single mass-drug administered dose of a hypnozoite-targeting antiparasite (e.g., primaquine, in combination with a broader artemisinin-based regimen to eliminate erythrocytic parasites), we used an analytical argument to determine the optimal dose timing within a single year, given Fourier approximations for the prevalence among the mosquito populations (Figure \ref{fig5}). (In the absence of seasonality, an approximate solution for the corresponding Fourier coefficients is found in terms of the model parameters in \cite{pastpaper}.) We found that the efficacy of a fixed dose oscillated over the course of the year, due to interactions between the periodic human movement and changes in the history-dependent force of reinfection. Our analytical computations for the optimal MDA time (which were developed in the case of a model system that tends towards a stable endemic limit cycle, as opposed to the case when prevalence decreases over time) incurred approximately 8.5\% error (with this error falling below $3\%$ when an endemic region of parameter space was tested instead). 

Moreover, our numerical simulations implied that, given the parameter values in Table \ref{parameters}, the best-possible MDA timing is reached around the time that the forest-goer disease prevalence is approximately highest, which falls two days before the period at which, on average, a forest-going individual migrates to Patch 2 (Figure 5(b)). The observation that maximum caseload among moving individuals aligns with the (average) inception of seasonal forest travel is supported by experimental findings: both events have been seen to occur close to the beginning of the rainy season in Cambodia and Laos \cite{kerkhof2016geographical, rerolle2021population}. In a scenario where individual forest-going teams have travel patterns aligning with the community-wide mean (e.g., with trips peaking in July or August, and tapering off in the following months, as in Rerolle \textit{et al.} \cite{rerolle2021population}), it may be that the best-possible MDA strategy is the administration of antimalarials to each team soon before departure. (This claim may be amenable to future work, particularly through an agent-based approach or other computational study.)

The optimal MDA timing is modulated strongly by the population demography (i.e., the proportion of moving and stationary individuals) and partially by the efficiency of the treatment with respect to latent and erythrocytic \textit{Plasmodium} parasites (represented by the parameters $p_{\textnormal{blood}}, \, p_{\textnormal{rad}}$ above). For instance, as the proportion of moving individuals increases, the optimal timing approaches the average beginning of forest travel (which, as mentioned above, tends to align with the beginning of the rainy season). 

We have also used the model (in Equation \eqref{eq:humsimplify}) to establish the relative efficacies of non-pharmaceutical interventions such as LLINs, SOR, SPR, and IRS (Figure \ref{sensanalysis}). We determine that, particularly in the case of a considerable moving population, LLINs may exceed IRS in efficacy, since the former specifically targets the per-mosquito bite rate, which in turn has considerable impact on the limit-cycle values for the proportion of infectious forest-going individuals. 

Owing to the mathematical complexity of the model in Equation \eqref{eq:humsimplify}, we constrain the spatial structure to only two disjoint sites---a single forest and village with varying FORIs---unlike, for instance, the agent-based model in Gerardin \textit{et al.}  \cite{gerardin2018impact}, which considered MDA interventions for \textit{P. falciparum} epidemics in two villages and a forested area. Additionally, we consider mean movement dynamics as opposed to adopting an agent-based approach, which makes the model analytically tractable, but tends to exclude the impact of individual-level stochastic behavior. In Southeast Asia, populations enter the forest seasonally, but tend to migrate in multiple teams, traveling at slightly different times across a period of several months \cite{rerolle2021population}. In a third limitation of the results presented here, our MDA optimization considers only a single round of radical cure, which (while being the preferred approach for tafenoquine, as discussed in Lacerda \textit{et al.} \cite{lacerda2019single}) is simpler than the work Anwar \textit{et al.} \cite{anwar2023optimal}, which considers an arbitrary number of MDA rounds. 

In future work, we hope to address some of the aforementioned limits on biological accuracy, by examining more general MDA dosage schemes, reconsidering the model dynamics in the presence of immunity and demographic processes, and modelling emerging \textit{P. vivax} concerns, such as antimalarial resistance. We also hope to extend our analytical approximations for the optimal MDA time to account for cases in which the prevalence has not attained a limit cycle, but is rather decaying, as is the case for the parameters in Table \ref{parameters}. 

To our knowledge, no previous analysis has mathematically evaluated \textit{P. vivax} interventions for forest-going populations in particular, even as \textit{P. vivax} is implicated in over 80\% of malaria cases in Cambodia, and forest-going populations account for over half of malaria cases in other regions of Southeast Asia (e.g., Indonesia) \cite{doum2023active,  gallalee2024forest}. Our work therefore provides insight into the complex considerations underpinning forest malaria prevention in Southeast Asia, contributing to a larger body of work combating \textit{P. vivax} endemicity.

\section*{Appendix I: Proof of Theorem \ref{thmEq}}
 \begin{proof}[Proof of Theorem \ref{thmEq}] To prove the theorem, we first introduce the continuous random variables $\mathcal{P}(t), \mathcal{B}, \mathcal{W},$ with support given by the sets $\{p_j(t)\}_{j \in \mathcal{B}}$, $\{B_j(t)\}_{j \in \mathcal{M}}$, and $\{w_{ij}(t)\}_{i,j \in \mathcal{M}}$, respectively, and with constant probability density function given by $\frac{1}{M}$. We note that \[\mathbb{E}[\mathcal{B}(t)] = A_1(t)\] 

and \[\mathcal{B}(t) = u + v \cos(\omega t + \mathcal{W}).\]
Letting $\textbf{1}_{I_M, j}(t)$ be the indicator function equal to $1$ if $j \in I_M$ at time $t$ and to $0$ otherwise, we have that 
\begin{equation}\label{eq:exp} \begin{aligned} I_M(t) = \int_{j \in M} \mathbb{E}\bigr[\textbf{1}_{I, j}(t)] \, dj= \int_{j \in M} p_j(t) \, dj. \end{aligned} \end{equation}

Moreover,  \begin{equation} \begin{aligned} & I_M(t) = M \mathbb{E}\bigr[ \mathcal{P}(t) \bigr] = \\ & M - M\mathbb{E}\Bigr[\exp\pa{\int_{0}^{t} \pa{\mathcal{B}(\tau) \lambda_1(\tau) + (1 - \mathcal{B}(\tau)) \lambda_2(\tau)}f(t - \tau)\ d\tau} \Bigr],\end{aligned}\end{equation}
where, with the assumption that $\mathcal{W}$, which is bounded from above by $\textnormal{sup}_{i,j\in\mathcal{M}} \abs{w_{i,j}}$, is small (in particular, that $\mathcal{W} << 1 - u - v \cos(\omega t)$ for all $t$), we obtain that 
\begin{equation} \begin{aligned} \int_{0}^{t} \pa{\mathcal{B}(\tau) \lambda_1(\tau) + (1 - \mathcal{B}(\tau)) \lambda_2(\tau)}f(t - \tau)\ d\tau \approx \frac{ab}{M}\,\int_{0}^{t}I_{m2}(\tau) f(t - \tau)\ d\tau \, + \\ u\int_{0}^{t}  \lambda_1(\tau)f(t - \tau)\ d\tau + v \int_{0}^{t} \lambda_1(\tau) \cos(
\omega \tau - \mathcal{W}) f(t - \tau)\ d\tau, \end{aligned} \end{equation} 
where we have used the fact that \be \begin{aligned} \int_0^t \pa{1 - \mathcal{B}(\tau)} \lambda_2(\tau) f(t - \tau) \, d\tau = \int_0^t \pa{1 - \mathcal{B}(\tau)} \pa{\frac{ab\,I_{m2}(\tau)}{(1 -  u - v\cos (\omega \tau))M}} f(t - \tau) \, d\tau \\ \approx \frac{ab}{M}\,\int_{0}^{t}I_{m2}(\tau) f(t - \tau)\ d\tau. \end{aligned} \ee

For small $Q_1, \abs{v}, \textnormal{sup}_{i,j\in\mathcal{M}} \abs{w_{i,j}}$ and large $\omega$, (e.g., $\omega >> 1$), \[\mathcal{C}(t):=v  \int_{0}^{t} \frac{ab\, I_{m1}(\tau) \cos(\omega \tau - \mathcal{W})}{N + (u + v \cos \omega t) \, M} f(t - \tau)\ d\tau\] 
satisfies \[\abs{\mathcal{C}(t)} << 1,\] where we use the fact that $|f(t)|$ is bounded by $1$ for all $t$.

We note that 
\be \begin{aligned} \abs{\mathbb{E}\bigr[1 - \mathcal{P}(t)  \bigr] - \exp \pa{\mathbb{E}\Bigr[\int_{0}^{t} \pa{\mathcal{B}(\tau) \lambda_1(\tau) + (1 - \mathcal{B}(\tau)) \lambda_2(\tau)} f(t - \tau)\ d\tau  \Bigr]}} <  \\ \max \mathcal{C}^2(t) \cdot  \exp\pa{\frac{ab}{M}\,\int_{0}^{t}I_{m2}(\tau) f(t - \tau)\ d\tau \, + u\int_{0}^{t}  \lambda_1(\tau)f(t - \tau)\ d\tau }.\end{aligned} \ee

Since \[\mathbb{E}\bigr[1 - \mathcal{P}(t)  \bigr] = e^{\mathcal{C}(t)} \exp\pa{\frac{ab}{M}\,\int_{0}^{t}I_{m2}(\tau) f(t - \tau)\ d\tau \, + u\int_{0}^{t}  \lambda_1(\tau)f(t - \tau)\ d\tau },\]
we see that
\be \begin{aligned} \abs{\mathbb{E}\bigr[1 - \mathcal{P}(t)  \bigr] - \exp \pa{\mathbb{E}\Bigr[\int_{0}^{t} \pa{\mathcal{B}(\tau) \lambda_1(\tau) + (1 - \mathcal{B}(\tau)) \lambda_2(\tau)}f(t - \tau)\ d\tau  \Bigr]}} <  \\ \frac{\mathcal{C}^2(t)}{e^{\mathcal{C}(t)}}  \ \mathbb{E}\bigr[1 - \mathcal{P}(t)  \bigr].\end{aligned} \ee 

This implies that we can choose some $\abs{\epsilon} \le \frac{\mathcal{C}^2(t)}{e^{\mathcal{C}(t)}}$ such that \[\mathbb{E}\bigr[1 - \mathcal{P}(t)  \bigr] = \frac{1}{1+\epsilon}\Biggr[\exp \pa{\mathbb{E}\Bigr[\int_{0}^{t} \pa{\mathcal{B}(\tau) \lambda_1(\tau) + (1 - \mathcal{B}(\tau)) \lambda_2(\tau)}f(t - \tau)\ d\tau  \Bigr]}\Biggr]\]

Since \[I_M(t) = M - M\mathbb{E}\bigr[1 - \mathcal{P}(t) \bigr],\] we find that \be \begin{aligned} & I_M(t) = M - M \pa{\frac{1}{1 + \epsilon}} \times \\ & \exp\pa{\int_{0}^{t} \pa{\mathbb{E}\bigr[ \mathcal{B}(\tau) \bigr] \lambda_1(\tau) + (1 - \mathbb{E}\bigr[ \mathcal{B}(\tau) \bigr]) \lambda_2(\tau)}f(t - \tau)\ d\tau}.\end{aligned} \ee
In particular, \[\lim_{\mathcal{C} \to 0} \frac{1}{1+\epsilon} = 1,\] which yields Theorem \ref{thmEq}.
\end{proof}

\section*{Appendix II: Proof of Theorem \ref{thmMDA}}
\begin{proof}[Proof of Theorem \ref{thmMDA}]
To derive a solution for the optimization problem, we will use the following two results: \\ 
\begin{lem}\label{lemma1} For sufficiently-large $\omega$ and $p_{\textnormal{blood}}, p_{\textnormal{rad}}$ arbitrarily close to $1$, a value of $s$ that minimizes $AP(s)$ must maximize \[\int_0^{\frac{2 \pi}{\omega}} q_I(s+k, s) \ dk,\] where $\tilde{I}_M(s + k, s) = 1 - e^{q_I(s+k, s)}$. \end{lem}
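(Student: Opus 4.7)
The plan is to rewrite $AP(s)$ explicitly in terms of $q_I$, then use the two hypotheses of the lemma to justify a first-order Taylor expansion of $e^{q_I}$ that reduces the optimization of $AP(s)$ to the optimization of $\int_0^{2\pi/\omega} q_I(s+k,s)\, dk$.

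First, I would substitute $\tilde{I}_M(s+k,s) = 1 - e^{q_I(s+k,s)}$ directly into the definition of $AP(s)$ in Equation \eqref{optimization_problem} to obtain
\[AP(s) = \frac{2\pi}{\omega} - \int_0^{2\pi/\omega} e^{q_I(s+k,s)}\, dk,\]
so that minimizing $AP(s)$ over $s$ is equivalent to maximizing $\int_0^{2\pi/\omega} e^{q_I(s+k,s)}\, dk$. The rest of the proof is then devoted to replacing $e^{q_I}$ by $1 + q_I$ to leading order.

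Second, using Equation \eqref{pIts}, I would split $q_I(s+k,s)$ into its ``after MDA'' contribution $\int_s^{s+k}(\lambda_1 A_1 + \lambda_2 A_2)\, f(s+k-\tau)\, d\tau$ and its ``before MDA'' contribution $\int_0^s (\lambda_1 A_1 + \lambda_2 A_2)\, r(s+k-\tau, s-\tau)\, d\tau$, and bound each separately. For the after-MDA piece, the interval of integration has length $k \le 2\pi/\omega$, so since $|f| < 1$ and $\lambda_1 A_1 + \lambda_2 A_2$ is bounded (by the boundedness of $I_{m1},I_{m2}$), this term is $O(1/\omega)$, hence small for $\omega$ large. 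For the before-MDA piece, inspection of Equation \eqref{req} shows that as $p_{\textnormal{blood}}, p_{\textnormal{rad}} \to 1$ the numerator of $r(t,s)$ tends uniformly to $1$ while the denominator tends uniformly to $1$ (using boundedness of $p_A$), so $r(t,s) \to 0$ uniformly in $t,s$. Since $\lambda_1 A_1 + \lambda_2 A_2$ is bounded, the before-MDA integral tends to zero uniformly in $s$ as well. Together these give a uniform bound $\sup_{s,k} |q_I(s+k,s)| \le \varepsilon$, with $\varepsilon \to 0$ as $\omega \to \infty$ and $p_{\textnormal{blood}}, p_{\textnormal{rad}} \to 1$.

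Third, with uniform smallness of $q_I$ in hand, I would apply the Taylor expansion $e^{q_I} = 1 + q_I + O(q_I^2)$ and integrate to obtain
\[\int_0^{2\pi/\omega} e^{q_I(s+k,s)}\, dk = \frac{2\pi}{\omega} + \int_0^{2\pi/\omega} q_I(s+k,s)\, dk + O(\varepsilon^2/\omega).\]
To leading order in $\varepsilon$, maximizing the left-hand side over $s$ is therefore equivalent to maximizing $\int_0^{2\pi/\omega} q_I(s+k,s)\, dk$, which is precisely the claim of the lemma. The main obstacle I anticipate is ensuring that the bound on $|q_I|$ really is uniform in $s$ (since the before-MDA integral is over the growing interval $[0,s]$), which forces the argument to hinge on the \emph{pointwise-in-$\tau$} uniform convergence $r \to 0$ rather than on the length of the integration domain; once that uniform convergence is secured, the quadratic remainder in the Taylor expansion is dominated by the linear term and the reduction of the optimization problem goes through.
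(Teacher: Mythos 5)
Your proposal is correct and follows essentially the same route as the paper's Appendix III: both reduce the problem to showing that $q_I(s+k,s)$ is uniformly small (the after-MDA term because the integration interval has length at most $2\pi/\omega$, the before-MDA term because $r\to 0$ as $p_{\textnormal{blood}},p_{\textnormal{rad}}\to 1$), so that $e^{q_I}\approx 1+q_I$ and maximizing $\int e^{q_I}\,dk$ is equivalent to maximizing $\int q_I\,dk$. The one point worth tightening is your handling of the before-MDA integral over the growing interval $[0,s]$: sup-norm convergence $r\to 0$ alone gives a bound proportional to $s$, so you should instead invoke the integrable majorant $\abs{r(t,\sigma)}<\abs{f(t)}$ with $\int_0^\infty \abs{f}\,d\tau = \abs{C_0}<\infty$ (both already established in Section \ref{mda} and Theorem \ref{thm1}) to get a bound that is uniform in $s$ and vanishes as $p_{\textnormal{blood}},p_{\textnormal{rad}}\to 1$.
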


\begin{lem}\label{lemma2} Fixing $t \in \mathbb{R}_+$, we have that \[\abs{\frac{\partial }{\partial s} r(s, t)} << |r(s, t)| \ \text{and} \ \abs{\frac{d }{d s} f(s)} << |f(s)|,\] for all  $s \in \mathbb{R}_+$.  \end{lem}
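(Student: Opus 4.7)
The plan is to show that in the parameter regime of Table \ref{parameters}, the derivatives of $f$ and of $r$ with respect to their first (``post-bite time'') argument both carry an explicit prefactor of $\max(\alpha,\mu,\gamma)$, after which the inequalities claimed in the lemma follow immediately from $\alpha,\mu,\gamma \ll 1$ day$^{-1}$. The computation is a direct application of the quotient rule to the closed forms in Equations \eqref{f_Eq} and \eqref{req}, together with a uniformity check.

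I would first rewrite Equation \eqref{f_Eq} as $f(s) = -\bigl(e^{-\gamma s} + \nu p_A(s)\bigr)/\bigl(1 + \nu p_A(s)\bigr)$, so that $|f(s)| = \bigl(e^{-\gamma s} + \nu p_A(s)\bigr)/\bigl(1 + \nu p_A(s)\bigr)$. The quotient rule then yields
\[
\frac{df}{ds} = \frac{\gamma e^{-\gamma s}\bigl(1+\nu p_A(s)\bigr) - \nu p_A'(s)\bigl(1 - e^{-\gamma s}\bigr)}{\bigl(1+\nu p_A(s)\bigr)^2},
\]
with $p_A'(s) = \tfrac{\alpha}{\alpha+\mu-\gamma}\bigl(-\gamma e^{-\gamma s} + (\alpha+\mu)e^{-(\alpha+\mu)s}\bigr)$ from Equation \eqref{hypnozoiteEqs}. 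Each term of the numerator carries an explicit factor of $\gamma$ or $\alpha+\mu$; dividing by $|f(s)|$, canceling a common $\bigl(1+\nu p_A(s)\bigr)$-piece, and bounding the remaining positive combinations by their suprema yields $|df/ds|/|f(s)| \le C_1 \max(\alpha,\mu,\gamma) \ll 1$, uniformly in $s$. The same procedure applies to $r$: differentiating the expression in Equation \eqref{req} with respect to its first argument, every term of the resulting numerator acquires a factor of $\gamma$, $\alpha$, or $\mu$, and division by $|r|$ (bounded below in absolute value using the chain of inequalities $|r(t,s)| < |f(t)|$ established just before Equation \eqref{req}) gives $|\partial_1 r|/|r| \le C_2 \max(\alpha,\mu,\gamma) \ll 1$.

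The main obstacle will be ensuring uniformity in $s \in \mathbb{R}_+$, since a priori the ratio could spike at points where $|f|$ or $|r|$ happens to be small. Three checks handle this: at $s \to 0^+$, $|f(s)| \to 1$ and $|df/ds| \to \gamma$, giving ratio $\gamma \ll 1$; as $s \to \infty$, both $|f|$ and $|df/ds|$ decay exponentially at matched rates $\min(\gamma,\alpha+\mu)$, giving a finite limiting ratio of the same order; and the monotonicity of $f$ noted after Equation \eqref{f_Eq} precludes any interior zero. For $r$, one repeats the endpoint analysis at $s=t$ and $s\to\infty$, using the sign computation $r<0$ from the text to rule out numerator cancellation. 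A minor subtlety is that $p_{\textnormal{rad}} - p_{\textnormal{blood}}$ may be negative, in which case one must verify that the $e^{-\gamma(s-t)}p_A(t)$ term in the denominator of $r$ does not cancel the explicit $\gamma$-factor extracted in the numerator of $\partial_1 r$; this follows from $p_A(t) \le 1$ and $p_{\textnormal{rad}}, p_{\textnormal{blood}} \in (0,1]$, which together keep the denominator bounded away from zero.
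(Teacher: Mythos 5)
Your proposal is correct and follows essentially the same route as the paper: differentiate the closed forms for $f$ and $r$, observe that every term in the numerator of the derivative carries an explicit prefactor of $\gamma$ or $\alpha+\mu$, and conclude from the smallness of those rates (the paper's Appendix IV does exactly this for $f$ and dismisses $r$ with ``follows similarly''). Your additional endpoint/uniformity checks and the explicit treatment of the $p_{\textnormal{rad}}-p_{\textnormal{blood}}$ sign in the $r$-case are more careful than the published argument but do not change the method.
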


Proofs for Lemmas \ref{lemma1} and \ref{lemma2} are provided in Appendices III and IV, respectively. Given these results, we consider values of $s \in [T, T + 2\pi/\omega]$ for which \[\int_0^{\frac{2 \pi}{\omega}} \frac{\partial q_I(s+k, s)}{\partial s} \ dk = 0.\]
Taking the partial derivative of Equation \eqref{pIts} with respect to $s$ and noting that $\tilde{I}_M(t, s) \neq 1$, we obtain that the value of $s$ that minimizes $AP(s)$ must satisfy 
\be\label{equalszero}\begin{aligned} \int_0^{\frac{2 \pi}{\omega}} \pa{h(s)\,r(0, k)+\int_0^s h(\tau) \, r'(s-\tau, s+k-\tau) \, d\tau}\, dk \\ + \int_0^{\frac{2 \pi}{\omega}} \pa{h(s+k)\,f(0) - h(s)\,f(k) + \int_s^{s+k} h(\tau) \,f'(s+k-\tau) \, d\tau}\, dk = 0, \end{aligned}\ee
where
\be \begin{aligned} & h(t) = \lambda_{1}(t)A_1(t) + \lambda_{2}(t)A_2(t) = \\ & \frac{I_{m1}(t)(u + v \cos \,(\omega t))}{N + (u + v \cos \,(\omega t))M}  \, + \frac{I_{m2}(t)((1 - u) - v\cos \,(\omega t))}{((1 - u) - v\cos \,(\omega t)M}.\end{aligned} \ee
When $v$ is arbitrarily small, \[h(t) \approx \frac{uabI_{m1}(\tau) }{N + Mu}  \, + \frac{abI_{m2}(\tau)}{M}.\]
Moreover, using Equation \eqref{coeff}, we find that \be\begin{aligned} h(t) \approx \frac{uab \pa{w^{(1)}_{-1} e^{-\omega \textbf{i} t} + w^{(1)}_0 + w^{(1)}_{1} e^{\omega \textbf{i} t}}}{N+Mu} + \frac{ab\pa{w^{(2)}_{-1} e^{-\omega \textbf{i} t} + w^{(2)}_0 + w^{(2)}_{1} e^{\omega \textbf{i} t}}}{M}, \end{aligned} \ee
so, fixing $k$, \be\begin{aligned} \int_0^s h(\tau) \, r'(s-\tau, s+k-\tau) \, d\tau = \\ \pa{\frac{Muabw^{(1)}_0 + (N+Mu)abw^{(2)}_0}{M(N+Mu)} }\pa{r(s, s + k) - r(0, k)} + \\  \pa{\frac{Muabw^{(1)}_{1} + (N+Mu)abw^{(2)}_{1}}{M(N+Mu)} \cdot \, e^{\omega \textbf{i} s}\int_0^s e^{-\omega \textbf{i} \tau} r'(\tau, \tau+k) \, d\tau } + \\ \pa{\frac{Muabw^{(1)}_{-1} + (N+Mu)abw^{(2)}_{-1}}{M(N+Mu)} \cdot \, e^{-\omega \textbf{i} s} \int_0^s e^{\omega \textbf{i}\tau} r'(\tau, \tau+k) \, d\tau }\end{aligned} \ee 

and \be\begin{aligned} \int_s^{s+k} h(\tau) \, f'(s+k-\tau) \, d\tau = \pa{\frac{Muabw^{(1)}_0 + (N+Mu)abw^{(2)}_0}{M(N+Mu)} }\pa{f(k) - f(0)} + \\  \pa{\frac{Muabw^{(1)}_{1} + (N+Mu)abw^{(2)}_{1}}{M(N+Mu)} \cdot \, e^{\omega \textbf{i} s}\int_{0}^{k} e^{-\omega \textbf{i} \tau} f'(\tau+k) \, d\tau } + \\ \pa{\frac{Muabw^{(1)}_{-1} + (N+Mu)abw^{(2)}_{-1}}{M(N+Mu)} \cdot \, e^{-\omega \textbf{i} s} \int_{0}^{k} e^{\omega \textbf{i}\tau} f'(\tau+k) \, d\tau }.\end{aligned} \ee 
Given that $w^{(1)}_{1}, w^{(1)}_{-1}, w^{(2)}_{1}, w^{(2)}_{-1} << 1$, $f'(t) << 1$, and $\partial_s \, r(s, t) << 1$ by Lemma \ref{lemma2}, one can reduce Equation \eqref{equalszero} to \be\begin{aligned} \int_0^{\frac{2 \pi}{\omega}}  \pa{\frac{Muabw^{(1)}_0 + (N+Mu)abw^{(2)}_0}{M(N+Mu)} }\pa{r(s, s+k) + f(k) - f(0) - r(0, k)} + \\ \pa{h(s)\,r(0, k)+ h(s+k)\,f(0) - h(s)\,f(k)} dk = 0.\end{aligned}\ee
Since \be\begin{aligned} \int_0^{\frac{2 \pi}{\omega}} h(s+k) \, dk  = \frac{2 \pi}{\omega} \pa{\frac{Muabw^{(1)}_0 + (N+Mu)abw^{(2)}_0}{M(N+Mu)} },\end{aligned}\ee  we obtain \be\begin{aligned} 0 = h(s) \int_0^{\frac{2 \pi}{\omega}} (r(0, k) - f(k)) \, dk + \\ \pa{\frac{Muw^{(1)}_0 + (N+Mu)abw^{(2)}_0}{M(N+Mu)} } \int_0^{\frac{2 \pi}{\omega}}  \pa{r(s, s+k) + f(k) - r(0, k)} \, dk = \\ 
\pa{\frac{Muab(w^{(1)}_{-1}e^{-\omega \textbf{i} s} + w^{(1)}_{1}e^{\omega \textbf{i} s}) + (N+Mu)ab(w^{(2)}_{-1}e^{-\omega \textbf{i} s} + w^{(2)}_{1}e^{\omega \textbf{i} s})}{M(N+Mu)}} \times \\ \int_0^{\frac{2 \pi}{\omega}} (r(0, k) - f(k)) \, dk \, + \\ \pa{\frac{Muabw^{(1)}_0 + (N+Mu)abw^{(2)}_0}{M(N+Mu)} } \int_0^{\frac{2 \pi}{\omega}}  r(s, s+k) \, dk. \end{aligned}\ee

We claim that this equation always has a solution for $s \in [T, T + 2\pi/\omega]$. In particular, using the fact that $s$ is large, so that the right-hand side of Equation \eqref{req} 
tends to zero, we obtain that the approximate solution for $s-T$ must satisfy
\be \begin{aligned} & \tan\,(\omega (s-T)) = -\frac{Mu(w^{(1)}_{-1} + w^{(1)}_{1}) + (N+Mu)(w^{(2)}_{-1} + w^{(2)}_{1})}
{Mu (w^{(1)}_{1} - w^{(1)}_{-1})\textbf{i}  + (N+Mu)(w^{(2)}_{1} - w^{(2)}_{-1})\textbf{i} }
\end{aligned}
\ee and \be s - T \in [0, 2\pi/\omega). \ee

The solution set to Equation \eqref{soleq} contains both the maximum and the minimum of $\int_0^{2\pi/\omega} q_I(s+k,s) \, dk$. To isolate the maximum, we impose the additional constraint 
\be\begin{aligned}
-\pa{Mu(w^{(1)}_{-1} + w^{(1)}_{1}) + (N+Mu)(w^{(2)}_{-1} + w^{(2)}_{1})} \sin (\omega (s-T)) + \\ \pa{Mu (w^{(1)}_{1} - w^{(1)}_{-1})\textbf{i}  - (N+Mu)(w^{(2)}_{1} - w^{(2)}_{-1})\textbf{i} } \cos \,(\omega (s-T)) < 0,
\hspace{10pt} 
\end{aligned}
\ee yielding the result in Theorem \ref{thmMDA}.

In particular, the optimal time of MDA application is explicitly dependent only on the proportion of the moving and stationary populations (represented by $M$ and $N$), the proportion of infectious mosquitoes (the oscillation in which is represented by $w^{(1)}_{-1}, w^{(1)}_{1}, w^{(2)}_{-1},$ and $w^{(2)}_{1}$), and the mean movement pattern (represented by $u$, $v$, and $\omega$). The dependence on $p_{\textnormal{blood}}, p_{\textnormal{rad}}$ is small, and lost when the minor terms (which are of order $O(w^{(1)}_1 + w^{(1)}_2)$) in Equations (30) and (31) are set to zero.  
\end{proof}
\section*{Appendix III: Proof of Lemma \ref{lemma1}}
\begin{proof}[Proof of Lemma \ref{lemma1}] To show that the value of $s$ maximizing $\int_0^{\frac{2 \pi}{\omega}} e^{q_I(s+k, s)} \ dk$ is arbitrarily close to the value maximizing 
$\int_0^{\frac{2 \pi}{\omega}} \pa{1+q_I(s+k, s)} \ dk$, we note that \be\begin{aligned}
 \lim_{p_{\textnormal{blood}},\,  p_{\textnormal{rad}} \, \to \, 1} q_I(s+k, s) = \int_s^{s+k} h(\tau) f(s+k - \tau) \, d\tau = \\ \pa{\frac{Muabw^{(1)}_0 + (N+Mu)abw^{(2)}_0}{M(N+Mu)} \int_0^{k} f(\tau) \, d\tau} +\\ \int_s^{s+k} e^{-\omega \textbf{i} \tau} \pa{\frac{Muabw^{(1)}_{-1} + (N+Mu)abw^{(2)}_{-1}}{M(N+Mu)}} f(s+k-\tau) \, d\tau + \\ \int_s^{s+k} e^{\omega \textbf{i} \tau}\pa{\frac{Muabw^{(1)}_1 + (N+Mu)abw^{(2)}_1}{M(N+Mu)}} f(s+k-\tau) \, d\tau,
\end{aligned} \ee where the $s$-dependent term is very small and \be \begin{aligned}\abs{\int_0^k f(\tau) \, d\tau} \le \abs{\int_0^{\frac{2\pi}{\omega}} f(\tau) \, d\tau} \le  \int_0^{\frac{2\pi}{\omega}} \pa{e^{-\gamma \tau} + \nu p_A(\tau)} \, d\tau \le \\ \frac{2\alpha+\mu-\gamma}{\gamma(\alpha + \mu - \gamma)}\pa{1-e^{-2\pi\gamma/\omega}} - \frac{\alpha}{(\alpha+\mu)(\alpha + \mu - \gamma)}\pa{1-e^{-2\pi(\alpha + \mu)/\omega}} << 1\end{aligned}
\ee for sufficiently large $\omega$ and standard values of $\alpha, \mu, \gamma$ (see Table \ref{parameters}). This implies that \[\lim_{p_{\textnormal{blood}},\,  p_{\textnormal{rad}} \, \to \, 1} q_I(s+k, s) << 1,\] from where the lemma follows. 
 \end{proof}

 \section*{Appendix IV: Proof of Lemma \ref{lemma2}}
\begin{proof}[Proof of Lemma \ref{lemma2}] We first show that $|f'(t)| << |f(t)|$. In particular, we note that \[-p_A'(t) < \max\{\alpha + \mu, \gamma\}p_A(t),\] where both $\alpha + \mu$ and $\gamma$ are assumed to be very small (on the order of $10^{-2}$, as shown in Table \ref{parameters}). 
As such, \be \begin{aligned} |f'(t)| = \abs{\frac{\gamma e^{-\gamma t}}{1 + \nu p_A(t)} - \frac{\nu p'_A(t) \pa{1- e^{-\gamma t}}}{\pa{1 + \nu p_A(t)}^2}} \le \\ \abs{\frac{\gamma\pa{1 + \nu p_A(t)} e^{-\gamma t}}{\pa{1 + \nu p_A(t)}^2} + \frac{\nu\, \max\{\alpha + \mu, \gamma\}\, p_A(t)}{\pa{1 + \nu p_A(t)}^2}} << |f(t)|, \end{aligned} \ee 
where in the last step we have used the fact that \[\gamma(1 + \nu p_A(t)) < \gamma(1 + \nu) << 1\] when $\nu$ is assumed to be much smaller than $1/\gamma$, as in Table \ref{parameters}.
The proof for $r(t,s)$ follows similarly.
\end{proof}
 
\section*{Funding}
 J.A. Flegg’s research is supported by the Australian Research Council (FT210100034, CE230100001) and the National Health and Medical Research Council (APP2019093). 

\section*{Data availability}
Data extracted electronically from Wangdi \textit{et al.} \cite{wangdi2018analysis} is included in the Github repository at \href{https://github.com/jflegg/malaria-three-scale}{https://github.com/jflegg/malaria-three-scale}.

\appendix

\FloatBarrier
\bibliographystyle{plain}
\bibliography{bibliography}

\begin{thebibliography}{10}

\bibitem{Anwar}
Md~Nurul Anwar, Roslyn~I Hickson, Somya Mehra, James~M McCaw, and Jennifer~A
  Flegg.
\newblock A multiscale mathematical model of {Plasmodium vivax} transmission.
\newblock {\em Bulletin of Mathematical Biology}, 84(8):81, 2022.

\bibitem{anwar2023optimal}
Md~Nurul Anwar, Roslyn~I Hickson, Somya Mehra, David~J Price, James~M McCaw,
  Mark~B Flegg, and Jennifer~A Flegg.
\newblock Optimal interruption of {P}. vivax malaria transmission using mass
  drug administration.
\newblock {\em Bulletin of Mathematical Biology}, 85(6):43, 2023.

\bibitem{bannister2019forest}
Melanie Bannister-Tyrrell, Charlotte Gryseels, Suon Sokha, Lim Dara, Noan
  Sereiboth, Nicola James, Boukheng Thavrin, Po~Ly, Kheang~Soy Ty, Koen~Peeters
  Grietens, et~al.
\newblock Forest goers and multidrug-resistant malaria in {Cambodia}: an
  ethnographic study.
\newblock {\em The American Journal of Tropical Medicine and Hygiene},
  100(5):1170, 2019.

\bibitem{bharti2006experimental}
Ajay~R Bharti, Raul Chuquiyauri, Kimberly~C Brouwer, Jeffrey Stancil, Jessica
  Lin, Alejandro Llanos-Cuentas, and Joseph~M Vinetz.
\newblock Experimental infection of the neotropical malaria vector {Anopheles
  darlingi} by human patient-derived {Plasmodium vivax} in the {Peruvian
  Amazon}.
\newblock {\em The American Journal of Tropical Medicine and Hygiene},
  75(4):610, 2006.

\bibitem{charlwood2003raised}
J~Derek Charlwood, Joao Pinto, Patrica~R Ferrara, Carla~A Sousa, Conceicao
  Ferreira, Vilfrido Gil, and Virgillo~E Do~Ros{\'a}rio.
\newblock Raised houses reduce mosquito bites.
\newblock {\em Malaria Journal}, 2:1--6, 2003.

\bibitem{chhim2021malaria}
Srean Chhim, Patrice Piola, Tambri Housen, Vincent Herbreteau, and Bunkea Tol.
\newblock Malaria in {Cambodia}: a retrospective analysis of a changing
  epidemiology 2006--2019.
\newblock {\em International Journal of Environmental Research and Public
  Health}, 18(4):1960, 2021.

\bibitem{collins2003retrospective}
William~E Collins, Geoffrey~M Jeffery, and Jacquelin~M Roberts.
\newblock A retrospective examination of anemia during infection of humans with
  {Plasmodium vivax}.
\newblock {\em The American Journal of Tropical Medicine and Hygiene},
  68(4):410--412, 2003.

\bibitem{doum2023active}
Dyna Doum, David~J Mclver, John Hustedt, Jeffrey Hii, Siv Sovannaroth, Dysoley
  Lek, Jason~H Richardson, Allison Tatarsky, and Neil~F Lobo.
\newblock An active and targeted survey reveals asymptomatic malaria infections
  among high-risk populations in mondulkiri, {Cambodia}.
\newblock {\em Malaria Journal}, 22(1):193, 2023.

\bibitem{dysoley2008changing}
Lek Dysoley, Akira Kaneko, Hideaki Eto, Toshihiro Mita, Doung Socheat, Anders
  B{\"o}rkman, and Takatoshi Kobayakawa.
\newblock Changing patterns of forest malaria among the mobile adult male
  population in {Chumkiri District, Cambodia}.
\newblock {\em Acta tropica}, 106(3):207--212, 2008.

\bibitem{pastpaper}
Shoshana Elgart, Mark~B Flegg, Somya Mehra, and Jennifer~A Flegg.
\newblock A spatial multiscale mathematical model of {Plasmodium vivax}
  transmission.
\newblock {\em Journal of Mathematical Biology}, 90(1):1--42, 2025.

\bibitem{gallalee2024forest}
Sarah Gallalee, Iska Zarlinda, Martha~G Silaen, Chris Cotter, Carmen Cueto,
  Iqbal~RF Elyazar, Jerry~O Jacobson, Roly Gosling, Michelle~S Hsiang, Adam
  Bennett, et~al.
\newblock Forest-goers as a heterogeneous population at high-risk for malaria:
  a case--control study in {Aceh Province, Indonesia}.
\newblock {\em Malaria Journal}, 23(1):37, 2024.

\bibitem{garrett1964human}
C~Garrett-Jones.
\newblock The human blood index of malaria vectors in relation to
  epidemiological assessment.
\newblock {\em Bulletin of the World Health Organization}, 30(2):241, 1964.

\bibitem{gerardin2018impact}
Jaline Gerardin, Amelia Bertozzi-Villa, Philip~A Eckhoff, and Edward~A Wenger.
\newblock Impact of mass drug administration campaigns depends on interaction
  with seasonal human movement.
\newblock {\em International Health}, 10(4):252--257, 2018.

\bibitem{gething2011modelling}
Peter~W Gething, Thomas~P Van~Boeckel, David~L Smith, Carlos~A Guerra, Anand~P
  Patil, Robert~W Snow, and Simon~I Hay.
\newblock Modelling the global constraints of temperature on transmission of
  {Plasmodium falciparum} and {P. vivax}.
\newblock {\em Parasites \& Vectors}, 4(1):92, 2011.

\bibitem{iv2024intermittent}
Sophea Iv, Chea Nguon, Phanith Kong, T{\'e}phanie Sieng, Sreynet Srun,
  C{\'e}line Christiansen-Jucht, Chanvong Kul, Thornleaksmey Lorn, Sophy Chy,
  Jean Popovici, et~al.
\newblock Intermittent preventive treatment for forest goers by forest malaria
  workers: an observational study on a key intervention for malaria elimination
  in {Cambodia}.
\newblock {\em The Lancet Regional Health--Western Pacific}, 47, 2024.

\bibitem{jongdeepaisal2021acceptability}
Monnaphat Jongdeepaisal, Mom Ean, Chhoeun Heng, Thoek Buntau, Rupam Tripura,
  James~J Callery, Thomas~J Peto, Franca Conradis-Jansen, Lorenz von Seidlein,
  Panarasri Khonputsa, et~al.
\newblock Acceptability and feasibility of malaria prophylaxis for forest
  goers: findings from a qualitative study in {Cambodia}.
\newblock {\em Malaria Journal}, 20(1):1--15, 2021.

\bibitem{kamiya2022targeting}
Tsukushi Kamiya, Douglas~G Paton, Flaminia Catteruccia, and Sarah~E Reece.
\newblock Targeting malaria parasites inside mosquitoes: ecoevolutionary
  consequences.
\newblock {\em Trends in Parasitology}, 2022.

\bibitem{kerkhof2016geographical}
Karen Kerkhof, Vincent Sluydts, Somony Heng, Saorin Kim, Myrthe Pareyn, Laura
  Willen, Lydie Canier, Siv Sovannaroth, Didier M{\'e}nard, Tho Sochantha,
  et~al.
\newblock Geographical patterns of malaria transmission based on serological
  markers for falciparum and vivax malaria in {Ratanakiri, Cambodia}.
\newblock {\em Malaria Journal}, 15:1--15, 2016.

\bibitem{kheang2021cambodia}
Soy~T Kheang, Ir~Por, Siv Sovannaroth, Lek Dysoley, Huch Chea, Ly~Po, Hala~J
  AlMossawi, Abu Al~Imran, and Neeraj Kak.
\newblock {Cambodia} malaria indicator survey 2020: implications for malaria
  elimination.
\newblock {\em MalariaWorld Journal}, 12, 2021.

\bibitem{klassen2009introduction}
Waldemar Klassen.
\newblock Introduction: development of the sterile insect technique for
  {African} malaria vectors, 2009.

\bibitem{kounnavong2017malaria}
Sengchanh Kounnavong, Deyer Gopinath, Bouasy Hongvanthong, Chanthalone
  Khamkong, and Odai Sichanthongthip.
\newblock Malaria elimination in {Lao PDR}: the challenges associated with
  population mobility.
\newblock {\em Infectious Diseases of Poverty}, 6(02):1--9, 2017.

\bibitem{lacerda2019single}
Marcus~VG Lacerda, Alejandro Llanos-Cuentas, Srivicha Krudsood, Chanthap Lon,
  David~L Saunders, Rezika Mohammed, Daniel Yilma, Dhelio Batista~Pereira,
  Fe~EJ Espino, Reginaldo~Z Mia, et~al.
\newblock Single-dose tafenoquine to prevent relapse of {Plasmodium vivax}
  malaria.
\newblock {\em New England Journal of Medicine}, 380(3):215--228, 2019.

\bibitem{li2023understanding}
Yao Li, Kathleen Stewart, Kay~Thwe Han, Zay~Yar Han, Poe~P Aung, Zaw~W Thein,
  Thura Htay, Dong Chen, Myaing~M Nyunt, and Christopher~V Plowe.
\newblock Understanding spatiotemporal human mobility patterns for malaria
  control using a multiagent mobility simulation model.
\newblock {\em Clinical Infectious Diseases}, 76(3):e867--e874, 2023.

\bibitem{masunaga2021search}
Yoriko Masunaga, Joan Muela~Ribera, Thuan~Thi Nguyen, Kemi Tesfazghi, and Koen
  Peeters~Grietens.
\newblock In search of the last malaria cases: ethnographic methods for
  community and private-sector engagement in malaria elimination in {Vietnam,
  Laos, and Cambodia}.
\newblock {\em Malaria Journal}, 20:1--13, 2021.

\bibitem{Mehra}
Somya Mehra, Eva Stadler, David Khoury, James~M McCaw, and Jennifer~A Flegg.
\newblock Hypnozoite dynamics for {Plasmodium vivax} malaria: the
  epidemiological effects of radical cure.
\newblock {\em Journal of Theoretical Biology}, 537:111014, 2022.

\bibitem{nekkab2021estimated}
Narimane Nekkab, Raquel Lana, Marcus Lacerda, Thomas Obadia, Andr{\'e}
  Siqueira, Wuelton Monteiro, Daniel Villela, Ivo Mueller, and Michael White.
\newblock Estimated impact of tafenoquine for {Plasmodium vivax} control and
  elimination in {Brazil}: A modelling study.
\newblock {\em PLoS Medicine}, 18(4):e1003535, 2021.

\bibitem{nofal2019can}
Stephanie~D Nofal, Thomas~J Peto, Bipin Adhikari, Rupam Tripura, James Callery,
  Thanh~Mai Bui, Lorenz Von~Seidlein, and Christopher Pell.
\newblock How can interventions that target forest-goers be tailored to
  accelerate malaria elimination in the {Greater Mekong Subregion}? a
  systematic review of the qualitative literature.
\newblock {\em Malaria Journal}, 18:1--10, 2019.

\bibitem{world2023world}
World~Health Organization.
\newblock {\em World malaria report 2023}.
\newblock World Health Organization, 2023.

\bibitem{world2016testing}
World~Health Organization et~al.
\newblock Testing for g6pd deficiency for safe use of primaquine in radical
  cure of {P. vivax} and {P. ovale}: Policy brief.
\newblock Technical report, World Health Organization, 2016.

\bibitem{overgaard2003effect}
Hans~J Overgaard, Barbara Ekbom, Wannapa Suwonkerd, and Masahiro Takagi.
\newblock Effect of landscape structure on anopheline mosquito density and
  diversity in northern thailand: Implications for malaria transmission and
  control.
\newblock {\em Landscape ecology}, 18:605--619, 2003.

\bibitem{phok2022behavioural}
Sochea Phok, Kemi Tesfazghi, Andy Tompsett, Boukheng Thavrine, Po~Ly, Saad
  El-Din Hassan, Avery Avrakotos, Jim Malster, and Erica Felker-Kantor.
\newblock Behavioural determinants of malaria risk, prevention, and
  care-seeking behaviours among forest-goers in {Cambodia}.
\newblock {\em Malaria Journal}, 21(1):362, 2022.

\bibitem{rerolle2021population}
Francois Rerolle, Jerry~O Jacobson, Paul Wesson, Emily Dantzer, Andrew~A Lover,
  Bouasy Hongvanthong, Jennifer Smith, John~M Marshall, Hugh~JW Sturrock, and
  Adam Bennett.
\newblock Population size estimation of seasonal forest-going populations in
  southern {Lao PDR}.
\newblock {\em Scientific reports}, 11(1):14816, 2021.

\bibitem{sandfort2020forest}
Mirco Sandfort, Am{\'e}lie Vantaux, Saorin Kim, Thomas Obadia, Ana{\"\i}s
  Pepey, Soazic Gardais, Nimol Khim, Dysoley Lek, Michael White, Leanne~J
  Robinson, et~al.
\newblock Forest malaria in {Cambodia}: the occupational and spatial clustering
  of {Plasmodium} vivax and {Plasmodium} falciparum infection risk in a
  cross-sectional survey in {Mondulkiri} province, {Cambodia}.
\newblock {\em Malaria Journal}, 19:1--12, 2020.

\bibitem{smith2010quantitative}
David~L Smith, Chris~J Drakeley, Christinah Chiyaka, and Simon~I Hay.
\newblock A quantitative analysis of transmission efficiency versus intensity
  for malaria.
\newblock {\em Nature Communications}, 1(1):108, 2010.

\bibitem{tripura2017submicroscopic}
Rupam Tripura, Thomas~J Peto, Christianne~C Veugen, Chea Nguon, Chan Davoeung,
  Nicola James, Mehul Dhorda, Richard~J Maude, Jureeporn Duanguppama, Krittaya
  Patumrat, et~al.
\newblock Submicroscopic {Plasmodium} prevalence in relation to malaria
  incidence in 20 villages in western {Cambodia}.
\newblock {\em Malaria journal}, 16:1--12, 2017.

\bibitem{von2019impact}
Lorenz Von~Seidlein, Thomas~J Peto, Jordi Landier, Thuy-Nhien Nguyen, Rupam
  Tripura, Koukeo Phommasone, Tiengkham Pongvongsa, Khin~Maung Lwin, Lilly
  Keereecharoen, Ladda Kajeechiwa, et~al.
\newblock The impact of targeted malaria elimination with mass drug
  administrations on falciparum malaria in {Southeast Asia}: a cluster
  randomised trial.
\newblock {\em PLoS Medicine}, 16(2):e1002745, 2019.

\bibitem{wangdi2018analysis}
Kinley Wangdi, Sara~E Canavati, Thang~Duc Ngo, Long~Khanh Tran, Thu~Minh
  Nguyen, Duong~Thanh Tran, Nicholas~J Martin, and Archie~CA Clements.
\newblock Analysis of clinical malaria disease patterns and trends in {Vietnam}
  2009--2015.
\newblock {\em Malaria Journal}, 17:1--15, 2018.

\bibitem{white2014modelling}
Michael~T White, Stephan Karl, Katherine~E Battle, Simon~I Hay, Ivo Mueller,
  and Azra~C Ghani.
\newblock Modelling the contribution of the hypnozoite reservoir to {Plasmodium
  vivax} transmission.
\newblock {\em Elife}, 3:e04692, 2014.

\bibitem{white2016variation}
Michael~T White, George Shirreff, Stephan Karl, Azra~C Ghani, and Ivo Mueller.
\newblock Variation in relapse frequency and the transmission potential of
  {Plasmodium vivax} malaria.
\newblock {\em Proceedings of the Royal Society B: Biological Sciences},
  283(1827):20160048, 2016.

\bibitem{white2018mathematical}
Michael~T White, Patrick Walker, Stephan Karl, Manuel~W Hetzel, Tim Freeman,
  Andreea Waltmann, Moses Laman, Leanne~J Robinson, Azra Ghani, and Ivo
  Mueller.
\newblock Mathematical modelling of the impact of expanding levels of malaria
  control interventions on {Plasmodium vivax}.
\newblock {\em Nature Communications}, 9(1):3300, 2018.

\bibitem{wilson2014topical}
Anne~L Wilson, Vanessa Chen-Hussey, James~G Logan, and Steve~W Lindsay.
\newblock Are topical insect repellents effective against malaria in endemic
  populations? a systematic review and meta-analysis.
\newblock {\em Malaria Journal}, 13:1--9, 2014.

\end{thebibliography}
 
\end{document}